\newtheorem{theorem}{Theorem}[section]
\newtheorem{lemma}[theorem]{Lemma}
\newtheorem{remark}{Remark}
\newtheorem{proposition}[theorem]{Proposition}
\numberwithin{equation}{section}
\numberwithin{equation}{section}
\newtheorem{Thm}{Theorem}
\newtheorem{Lem}{Lemma}
\def\({\left(}
\def\){\right)}
\begin{document}
\title[]{An ``adaptive" approach to control explosive aphid populations}

\author[1]{\fnm{Aniket} \sur{ Banerjee}}\email{aniket.banerjee@sorbonne-universite.fr}

\author[2]{\fnm{Urvashi} \sur{Verma}}\email{uverma@iastate.edu}

\author[3]{\fnm{Satyam} \sur{Narayan Srivastava}}\email{srivastava@math.muni.cz}

\author[2]{\fnm{Rana} \sur{D. Parshad}}\email{rparshad@iastate.edu}

\affil[1]{Sorbonne Université, Université Paris Cité, CNRS, Laboratoire Jacques-Louis Lions, LJLL, F-75005 Paris, France}

\affil[2]{Department of Mathematics, Iowa State University, Ames, IA 50011, USA}

\affil[3]{Department of Mathematics and Statistics, Faculty of Science, Masaryk University, Brno, Czech Republic}

\abstract
{Classical models of aphid population dynamics are unable to explain multi-peak patterns in field populations. We consider the variable carrying capacity model (VCM), which can generate such complex multi-peak dynamics, but is also demonstrated to show finite-time blow-up behavior via a sign switching structural instability. We build an adaptive behavioral model with a density-dependent switch to stabilize growth, effectively eliminating blow-up, and also capable of generating multiple peaks. Furthermore, guided by empirical work on environment drivers for pests, we devise a non-autonomous model with time-dependent host plant fitness, successfully connecting transient population dynamics with abiotic drivers such as flooding. Finally, we discuss the practical significance of the results through the Economic Threshold (ET) and Economic Injury Level (EIL) calculation for all models. Our simulations all clearly show that aphid abundances exceed these threshold levels, and control is required. Our work provides a stable and biologically relevant prediction scheme for pest outbreaks and their management strategy.}

\keywords{Aphid Dynamics; Finite Time Blow-up; Variable Carrying Capacity; Adaptive behavior model, economic injury level, non-autonomous ODE; transient periodic dynamics}

\maketitle
\section{Introduction}

Aphids are a destructive insect pest causing serious economic loss on valuable crops and inflicting damage to staple crops such as soybeans \cite{catangui2009soybean, kindlmann2010modelling}. Devising efficient resistance management practices to manage aphid populations has been the prime concern of investigations in agricultural entomology \cite{Tabashnik2013, dean2021developing}. Numerous control techniques have been suggested, such as classical refuge strategies, top-down control by predators \cite{r1}, as well as novel refuge strategies \cite{toab218}. Yet, the success of such measures relies on the strong knowledge of aphid population dynamics complicated by their special reproductive plasticity. For example, the soybean aphid \emph{(Aphis glycines)} experiences explosive growth in the asexual stage during summer months, reaching populations of several thousand per plant, only to collapse in late summer as host plants begin to senesce, and winged aphids subsequently migrate to wintering hosts \cite{ragsdale2004soybean, ragsdale2011ecology, tilmon2011biology}.

Although current models fare well in representing individual boom-bust cycles in aphid populations, long-term field data projects a more nuanced picture \cite{costamagna2007exponential, matis2007stochastic,matis2009population}. In particular, in real field data, we frequently find more than one population peak per growing season - a process less easily explained \cite{K07, D90}. New advances in ecological modeling indicate that such multi-peaked population dynamics could be modeled as solutions of a variable carrying capacity model. In contrast to the static models of conventional approaches that address resources as such, this method appreciates how constantly the host plant quality varies. Conditions such as changing nitrogen availability, induced plant defenses, and environmental pressures all make the aphid's food source ``variable".
In the current manuscript, we propose models that can effectively capture this inherent variability. By coupling the population dynamics of aphids and their fluctuating resources, we reproduce the multi-peak patterns that are typical of field data. Our models also provide us with precise predictions—and ultimately, more effective pest management practices.

Models for aphid dynamics exhibiting multiple population peaks within a single growing season have been recently proposed by Kindlmann et al. \cite{DD20}. These consider a variable carrying capacity. Note, although this model captures multiple peaks, it also exhibits unbounded growth, which is biologically unrealistic, since in reality, resources are always limited. Thus, a different modeling technique needs to be delved into to address the multiple peak dynamics (possibly due to abiotic stressors), but without unbounded growth. One such method is the adaptive behavior model that has been investigated in the past for pests on grapevines  \cite{ainseba2023}, where the dynamical behavior of the pest is assumed to evolve over the season, based on changes in abiotic factors or seasonal diapause. Inspired by such and related works, in the current work, we have formulated an adaptive behavior model to include evolving pest dynamics over time, due to a population boom.

Transient population dynamics have been increasingly recognized as a central theme in ecology, shaping population persistence, species interactions, and management outcomes; some notable research in this direction includes \cite{t1,t2,t3,t4}.
Classical autonomous models for pest dynamics, which assume constant parameters \cite{toab218}, often fail to capture the transient dynamics of ecological systems driven by changing environmental factors, such as flooding, drought, prolonged daylight, and temperature variation. These can, however, be effectively addressed by non-autonomous systems with time-dependent parameters \cite{coutinhoa2006threshold,hallam1981non,baranyi1993non}. A plant's fitness, for example, has been observed to change due to environmental factors such as excessive water stress, caused due to flooding events \cite{lewis2025host}. Motivated by these observations, we consider a non-autonomous model that incorporates time-dependent plant fitness and pest growth rate, thereby allowing us to explore transient periodic behavior and eventual extinction dynamics that cannot be captured by autonomous counterparts.

\par Effective pest management relies on maintaining pest populations within acceptable limits, which is guided by the ET (Economic Threshold) and EIL (Economic Injury Level).  ET represents the pest population level at
which a control action has to be taken in order to prevent economic
harm to a crop, while EIL indicates the population level at which pests cause actual economic damage to crops. Ragsdale et al. \cite{ragsdale2007economic} were the first to report the ET and EIL for soybean aphids.  Closely monitoring these two key measures across all the models helps us to compare whether these different methods are able to keep pest populations in check. Thus, simulations with the models we develop take into account these key metrics, and several results are provided that show the various model settings that can cause delay to reach ET and therefore EIL levels.

The paper is organized as follows: In Section \ref{Aphid Population Dynamics}, we review the Kindlmann model with varying carrying capacity and analyze its blow-up behavior. Section \ref{Adaptive behavior model} introduces the adaptive behavior model and shows the well-posedness of the model. Section \ref{Non-autonomous Model} demonstrates a realistic non-autonomous model describing the change of fitness of aphids over time, and Section \ref{ET_EIL} discusses the economic damage caused by pest population and population threshold to perform control strategies. We conclude with a summary and future research directions in Section \ref{Discussion And Conclusion}.

\section{Aphid Population Dynamics}
\label{Aphid Population Dynamics}
\subsection{Model for boom-bust dynamics}

Kindlmann and co-authors \cite{kindlmann2010modelling} introduce the following population model to capture the boom-bust dynamics commonly seen in aphid populations,

 \begin{equation}
\label{eq:cl1}
\begin{aligned}
& \frac{dh}{dt} = a x,  \ h(0)=0 \\
& \frac{dx}{dt} =  (r-h)x,  \ x(0)=x_{0}
\end{aligned}
\end{equation}


Here $h (t)$ is the cumulative population density of a single aphid biotype at time $t$; $x(t)$ is the population density at time $t$, $a$ is a scalar constant, and $r$ is the growth rate of the aphids. The aphid population initially rises, driven by the linear growth term - this is the "boom" phase. Still, as the cumulative density exceeds the growth rate $r$, the population decreases, driven by competition. This results in the "bust" phase \cite{kot2001elements} and "boom-bust" phases, which can be seen in fig \ref{fig:kcm_boombust}. This is commonly seen in the population dynamics of aphids and has been observed in soybean aphids in North America \cite{catangui2009soybean}, with colonization in June and then gradual buildup of population, peaking in August and declining, with aphids dispersing in September to their overwintering host.
These dynamics differ from those predicted by the classical logistic growth model, which shows convergence to the carrying capacity state.  This also provides a framework to investigate additional situations, such as if the carrying capacity were variable in time.

\subsection{Logistic model with variable carrying
capacity and growth rate affected by
cumulative density}

It has been observed in the field on occasion that the classical (one-peak) boom-bust dynamics will not occur; rather, one may see multiple peaks during the growing season, one initially followed by a downturn and then another peak \cite{D90}. This could be for a plethora of reasons. Host plant suitability changes over the growing season, environmental and/or weather-driven changes, an excessive abundance of aphids at the beginning of the season, enhanced competition, and reduced fecundity \cite{K07, Ke50}.
The following population model for aphids \cite{K07}, with variable carrying capacity, is introduced to capture such dynamics. We abbreviate this as VCM (variable carrying capacity logistic model) henceforth,
\begin{equation}
\begin{split}
\label{eq:1}
\frac{dh}{dt} &= a x, \ h(0) =0 \\
\frac{dx}{dt} &=  (r-h)x\left(1-\frac{x}{k}\right),  \ x(0) =x_{0}
\end{split}
\end{equation}

\begin{equation}
\label{eq:22a}
\text{Here, } \ 
k=k(t)=\left( k_{max} - k_{min}\right)f(t) + k_{min}, \
f(t) = \left(\frac{cos( d \pi t)+1}{2}\right)
\end{equation}

$f(t)$ is modeled as a trigonometric function to mimic the environment/resources fluctuating, such that the total carrying capacity fluctuates between $k_{max}$ and $k_{min}$ values (as $f(t)$ fluctuates between $0$ and $1$).



\begin{figure}
\begin{subfigure}{.28\textwidth}
    \includegraphics[width = 4.8cm,height=5.0cm]{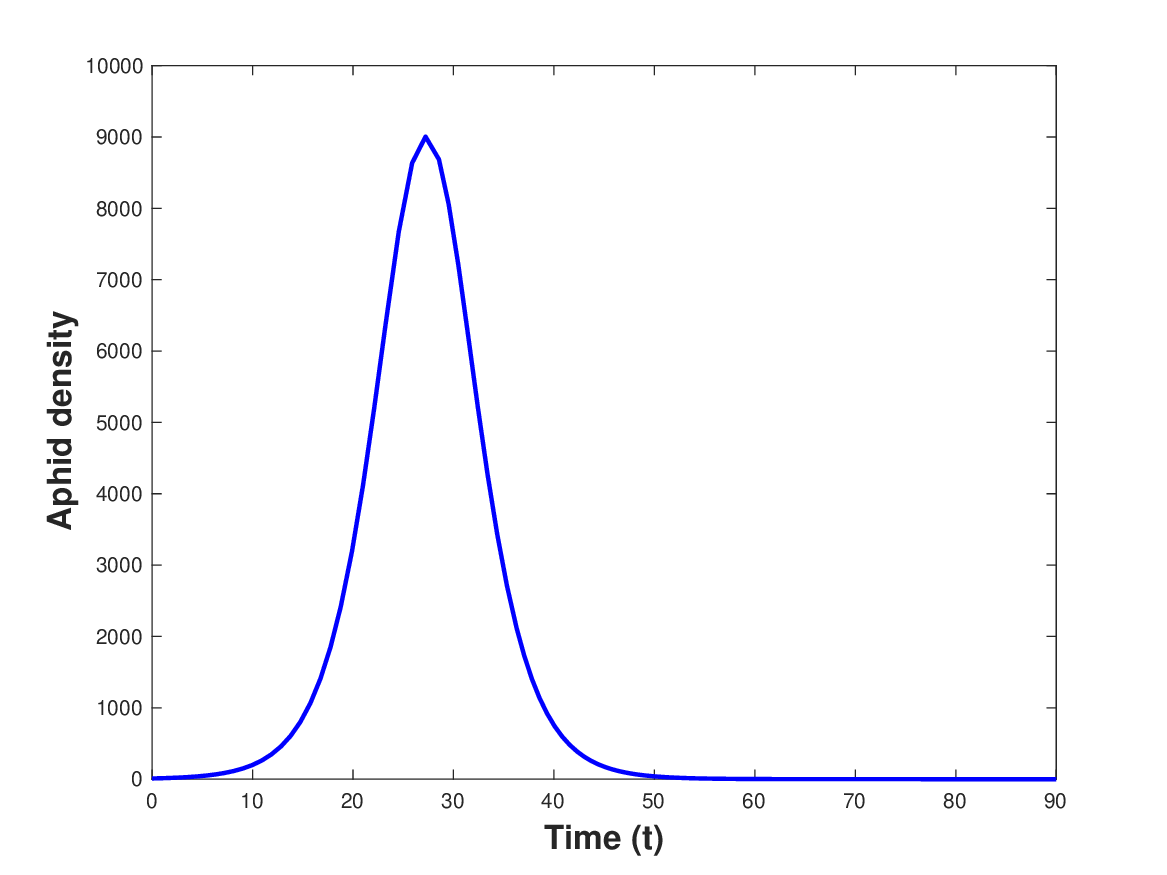}
    \subcaption{}
\label{fig:kcm_boombust}
\end{subfigure}
\begin{subfigure}{.34\textwidth}
\centering
\includegraphics[width= 5.8cm, height=5cm]{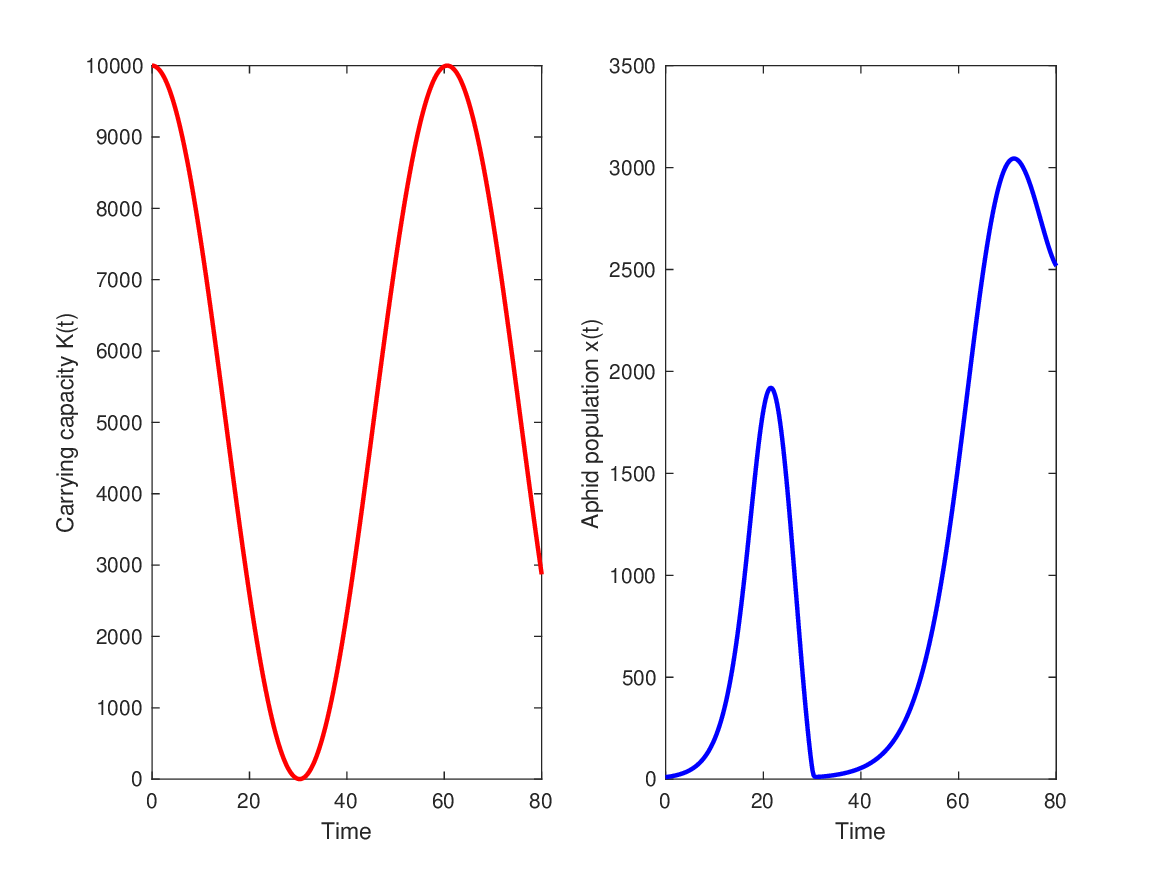 }
\subcaption{}
\label{fig:klm_x0_80}
\end{subfigure}
\begin{subfigure}{.34\textwidth}
\centering
\includegraphics[width= 5.8cm, height=5cm]{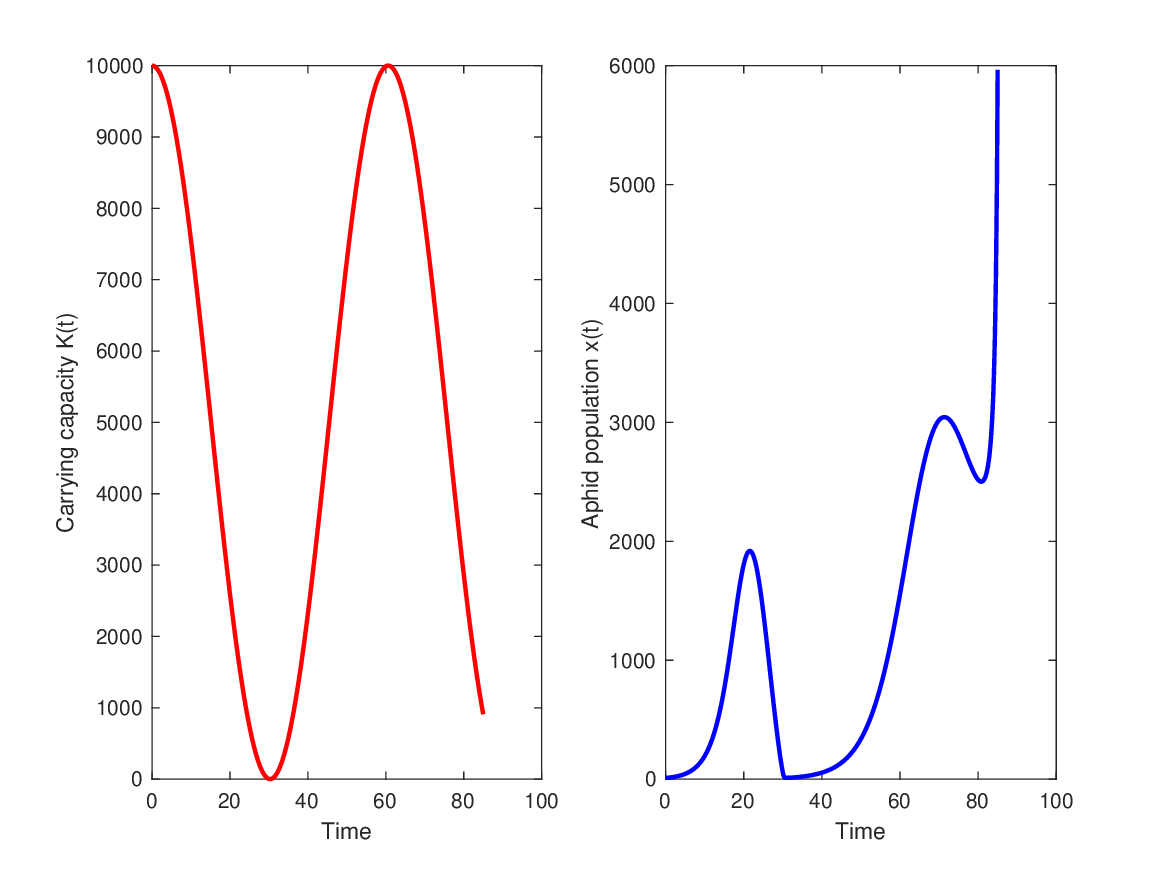 }
\subcaption{}
\label{fig:klm_x0_85}
\end{subfigure}
\caption{For Figure \ref{fig:kcm_boombust}: The time series figure shows the boom-bust scenario for model \eqref{eq:cl1}. The parameter set used:
$ a = 0.000005, r=0.3$. The initial population density is $h_0=0,x_0=10$. For Figure  \ref{fig:klm_x0_80}:  The time series figure shows the carrying capacity and aphid population density with $t = 80$ for the Logistic population model with variable carrying capacity \eqref{eq:1}. The parameter set used: $ K_{max} = 10000, K_{min} =1, d =.033, a = 0.000005, r=0.3$. The initial aphid population density is $x_0 = 10$. For Figure \ref{fig:klm_x0_85}: The time series figure shows the carrying capacity and finite time blow-up of the aphid population density at $t \approx 85 $ for the Logistic population model with variable carrying capacity \eqref{eq:1}. The parameter set used: $ K_{max} = 10000, K_{min} =1, d =.033, a = 0.000005, r=0.3$. The initial aphid population density is $x_0 = 10$.}
\end{figure}

\begin{table}
\centering
 \begin{tabular}{ |p{2.3cm}||p{4.4cm}||p{2.2cm}|  }
 \hline
 \multicolumn{3}{ |c| }{Blow-up time with  variation in $``a"$} \\
 \hline
 Value of $``a"$ & Blow-up for initial data $x_0$ & Time (days)  \\
 \hline
 0.000005  & 105 & $\approx $ 30\\
 \hline
 0.00005 & 10  & $\approx $ 27 \\ 
  \hline
 0.0005 & 1 & $\approx $ 30\\ 
  \hline
 0.005  & 9999  & $\approx $ 1 \\
  \hline
\end{tabular}
\caption{A table to show the sensitivity of system \eqref{eq:1} to the parameter $a$. The table shows for different values of the solution blow-up and the time, and the lowest initial data when the solution blow-up occurs.}
\label{table_blowup}
\end{table}

\begin{remark}
We see via Fig. \ref{fig:klm_x0_80} that the VCM can predict multiple peaks in a single season, one early on around day 20, and one later at around day 75 - however, running the simulation a little longer yields an ``exploding" solution around day 85, as seen via Fig. \ref{fig:klm_x0_85}.
    \end{remark}
Thus, the simulation results, as seen via Fig. \ref{fig:klm_x0_85}, motivate us to rigorously investigate both analytically and numerically the blow-up dynamic present in the VCM \eqref{eq:1}.

\begin{remark}
Heuristically, the blow-up in the VCM \eqref{eq:1} happens when the cumulative density $h$ exceeds r, and the sign of $(r-h)$ is negative. Thus, the standard logistic equation (or even logistic equation with variable carrying capacity $k$) is typically described via $x - \frac{x^{2}}{k}$, flips the sign to have a term like
$\frac{x^{2}}{k} - x$, which can blow-up for sufficiently chosen initial conditions. Similar blow-up results due to sign-changing non-linearity can be seen in works \cite{S47, T22a}.
    \end{remark}

\subsubsection{Some standard and auxiliary results}
\begin{lemma}
Consider the VCM given by \eqref{eq:1}. Then for positive initial data  $x_{0} > 0$, solutions to 
\eqref{eq:1} remain non-negative.
\label{lemma:VNLM_1}
\end{lemma}

\begin{proof}
The result follows by checking the quasi-positivity condition via Lemma \ref{lem:qp}.
\end{proof}
\begin{lemma}
\label{lem:l1}
Consider the VCM given by \eqref{eq:1}. Then $\forall \epsilon > 0$, $\exists \delta(\epsilon) > 0, x^{*}_{0}(\epsilon) > 0$ s.t.

\begin{equation}
x(t) > k_{max}\left( \frac{r}{a k_{min}}  + 1 \right) + \epsilon, \ \forall t \in [0,\delta],
\end{equation}

for all solutions to \eqref{eq:1}, initiating from the initial data $x^{*}_{0}(\epsilon) > 0$.
\end{lemma}

\begin{proof}
The result follows via continuity of solutions to \eqref{eq:1} via Theorem \ref{thm:class3}.
\end{proof}

\subsubsection{Blow-up in finite time in VCM }
\begin{theorem}
\label{thm:t1t}
Consider the VCM given by \eqref{eq:1}. Then, for sufficiently large initial data, solutions to
\eqref{eq:1}-\eqref{eq:22a} blow-up in finite time, that is, $
\limsup_{t \rightarrow T^{*} < \infty} |x(t)| \rightarrow + \infty.$
\end{theorem}
\begin{proof}
See Appendix \ref{proof_thm:t1t}.
\end{proof}
\begin{theorem}
\label{thm:t1}
Consider the VCM given by \eqref{eq:1}. Then for initial data sufficiently large, that is $x_{0} > k_{max}\left( \frac{r}{a k_{min}}  + 1 \right) $, solutions to 
\eqref{eq:1} blow-up in finite time, that is, $ \limsup_{t \rightarrow T^{*} < \infty} |x(t)| \rightarrow + \infty.$
\end{theorem}
\begin{proof}
    See Appendix \ref{proof_thm:t1}.
\end{proof}

\begin{remark}
The blow-up in $h(t)$, follows using \eqref{eq:3}. In the more general case, one can compare to the ODE, $\boxed{y^{'} = C_{3}y^{p} - C_{4}y^{q}}, p>q>1, C_{3}>0, C_{4} > 0$.
\end{remark}

\begin{lemma}
\label{Lem:l12}
Consider the VCM given by \eqref{eq:1}. Then for initial data sufficiently large, i.e., $x_{0} > k_{max}\left( \frac{r}{a k_{min}}  + 1 \right) $, the cumulative pest density blows up in finite time, i.e., $\limsup_{t \rightarrow T^{*} < \infty} |h(t)| \rightarrow + \infty.$
\end{lemma}

\begin{proof}
See Appendix \ref{proof_Lem:l12}.
\end{proof}

\subsubsection{Blow-up for other initial conditions}

We next explore the case when blow-up is possible for other positive initial data, possibly small. 

\begin{remark}
The estimate via Theorem \ref{thm:t1}, is only sufficient, in that if the initial data is large enough, $x_{0} > k_{max}\left( \frac{r}{a k_{min}}  + 1 \right) $, then blow-up will occur. This threshold depends strongly on the parameter $a$. 
\end{remark}

The smaller $a$ is, the larger the data required for blow-up. However, this is not seen in simulations. Rather small $a$ leads to blow-up for essentially any positive initial condition. This motivates proving blow-up for any positive initial data under certain parametric restrictions. One approach is to ``construct" a lower solution and derive conditions under which this lower solution blows up for any initial condition. An ``approximate" ODE for the lower estimate is given by,
\begin{equation}
\frac{dx}{dt} = \frac{a}{k_{max}}x^{3} - \left( \frac{r}{k_{min}}  + a \right) x^2 + rx
\end{equation}
We note this is only a crude approximation because (without enforcing any positivity conditions) we require,
$ h(t) = \int^{t}_{0}a x(s)ds \approx x(t),
$
If one applies this ``approximation" and explores the cubic,
\begin{equation}
\label{eq:c3}
\left(  \frac{a}{k_{max}}\right)x^{3} - \left( \frac{r}{ k_{min}}  + a \right) x^2 + rx = x \left(  \left( \frac{a}{k_{max}}\right)x^{2} - \left( \frac{r}{ k_{min}}  + a \right) x + r \right)
\end{equation}

This has one real zero, $x=0$, for others we check the discriminant. 
Then, the other two roots must be complex. Since the coefficient of the leading term in the cubic is positive, it must approach positive infinity as $x \rightarrow \infty$. Thus, standard phase analysis will yield blow-up for any positive initial data. The required condition for this is,
\begin{equation}
\left( \frac{r}{ k_{min}}  + a \right) ^{2} - 4 \left(\frac{a}{k_{max}}\right) r < 0
\end{equation}
However, this is never true as $k_{max} > k_{min}$. Thus, there exist two positive roots, which yield blow-up in finite time for the cubic ODE \eqref{eq:c3}, see figure \eqref{plot_poly}. So, we investigate different modeling techniques to formulate a bounded model with the same dynamics of multiple peaks as in VCM \eqref{eq:1}. A very effective strategy that has been studied extensively is the delay equation models.

\subsubsection{Blow-up via delay equations}
We recap the following result from the literature, \cite{Ez06}

\begin{proposition}
Consider the following delayed equation
\begin{equation}
\label{eq:d13}
\begin{aligned}
& y^{'}(t) = |y(t)|^{p} - |y(t-\tau)|^{q}, \ t>0,  \\
& y(\tau_{1}) = \phi(\tau_{1}), \ \tau_{1} \in [-\tau,0]
\end{aligned}
    \end{equation}

Assume that $p > \max{(q,1)}$ and $\phi$ satisfying $\phi(0) \geq |\phi(t)|^ {\frac{q}{p}}$ for all $t \in [- \tau,0], \phi(0) \geq  1, \phi(0) > |\phi(-\tau)|^{\frac{q}{p}}$, then the solution \eqref{eq:d13} blows-up in finite time.
\end{proposition}
In order to proceed, we need to modify the above and state the following Theorem,
\begin{theorem}
\label{thm:l1k}
    Consider the following delayed equation
\begin{equation}
\label{eq:d1}
\begin{aligned}
& y^{'}(t) = |y(t)|^{p} - M|y(t-\tau)|^{q}, \ t>0, \\
& y(\tau_{1}) = \phi(\tau_{1}), \ \tau_{1} \in [-\tau,0]. 
\end{aligned}
    \end{equation}
Assume that $p > \max{(q,1)}$ and $\phi$ satisfying $\phi(0)  \geq M^{\frac{1}{p}}|\phi(t)|^{\frac{q}{p}}$ for all $t \in [- \tau,0], \phi(0) \geq M, \phi(0) > M^{\frac{1}{p}}|\phi(-\tau)|^{\frac{q}{p}}$, then the solution \eqref{eq:d1} blows-up in finite time.
\end{theorem}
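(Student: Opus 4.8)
The plan is to reduce the statement to the already-known Proposition from \cite{Ez06} (the case $M=1$) by absorbing the constant $M$ through a joint rescaling of amplitude and time. The point is that $M$ cannot be removed by amplitude scaling alone, but it can be removed once we are allowed to also rescale time, at the cost of changing the delay length $\tau$ into a new delay $\tilde\tau$; since the Proposition holds for an arbitrary delay, this is harmless.

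First I would set $Y(s)=\lambda\,y(\mu s)$ for positive constants $\lambda,\mu$ to be chosen, which converts the delay into $\tilde\tau=\tau/\mu$. Differentiating and writing $y(\mu s)=\lambda^{-1}Y(s)$ and $y(\mu s-\tau)=\lambda^{-1}Y(s-\tilde\tau)$ gives
\[
Y'(s)=\mu\lambda^{1-p}\,|Y(s)|^{p}-M\mu\lambda^{1-q}\,|Y(s-\tilde\tau)|^{q}.
\]
Imposing that both coefficients equal $1$ gives $\mu\lambda^{1-p}=1$ and $M\mu\lambda^{1-q}=1$; dividing the second by the first yields $M\lambda^{\,p-q}=1$, so $\lambda=M^{-1/(p-q)}$ and $\mu=\lambda^{\,p-1}=M^{-(p-1)/(p-q)}$. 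Both constants are positive because $p>q$ and $p>1$, hence $\tilde\tau>0$ and $Y$ solves exactly the equation $Y'=|Y|^{p}-|Y(s-\tilde\tau)|^{q}$ to which the Proposition applies.

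The crux is then to check that the rescaled history $\Phi(s)=\lambda\,\phi(\mu s)$, $s\in[-\tilde\tau,0]$, meets the three hypotheses of the Proposition. Using $\lambda^{(p-q)/p}=M^{-1/p}$, the requirement $\Phi(0)\ge|\Phi(s)|^{q/p}$ simplifies to $\phi(0)\ge M^{1/p}|\phi(t)|^{q/p}$ for all $t\in[-\tau,0]$, and $\Phi(0)>|\Phi(-\tilde\tau)|^{q/p}$ simplifies to $\phi(0)>M^{1/p}|\phi(-\tau)|^{q/p}$; these are exactly two of our standing assumptions. The remaining normalization $\Phi(0)\ge1$ reads $\phi(0)\ge\lambda^{-1}=M^{1/(p-q)}$, which I would recover by evaluating the first assumption at $t=0$: since $\phi(0)>0$ this gives $\phi(0)^{(p-q)/p}\ge M^{1/p}$, i.e. $\phi(0)\ge M^{1/(p-q)}$. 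I note that the hypothesis $\phi(0)\ge M$ listed in the statement is a convenient sufficient form; the computation only needs the weaker bound $\phi(0)\ge M^{1/(p-q)}$, which is automatic from the first assumption. With all three hypotheses in force, the Proposition gives that $Y$ blows up at some finite $S^{*}$; since $y(t)=\lambda^{-1}Y(t/\mu)$, the original solution blows up at $T^{*}=\mu S^{*}<\infty$.

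I expect the hypothesis bookkeeping above to be the only genuinely delicate point, specifically the recognition that the normalization $\Phi(0)\ge1$ is encoded in the first assumption evaluated at the endpoint $t=0$; the rest is algebra. As an independent check, and as a fallback should one prefer to avoid invoking the cited Proposition, one can argue directly: the conditions force $\phi(0)\ge M^{1/(p-q)}$, and a method-of-steps bootstrap shows $y$ stays above this threshold and is nondecreasing, whence $y(t-\tau)\le y(t)$ and the delay equation is dominated below by the autonomous differential inequality $y'(t)\ge y(t)^{p}-M\,y(t)^{q}\ge\tfrac12\,y(t)^{p}$ once $y(t)\ge(2M)^{1/(p-q)}$; since $p>1$ this ODE inequality blows up in finite time by comparison.
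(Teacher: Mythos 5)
Your proof is correct, but it takes a genuinely different route from the paper. The paper proves the theorem from scratch by adapting the Ezzinbi--Jazar argument with the constant $M$ carried through: it sets $t_{1}=\sup\{t\geq 0: y'(s)>0,\ s\in[0,t]\}$, rules out $t_{1}<T^{*}$ by a two-case contradiction (using $\phi(0)\geq M\geq 1$ to get $|y|^{q/p}<|y|$), concludes $y$ is increasing with $|y(t)|^{p}-M|y(t-\tau)|^{q}>0$, and then compares with the ODE $y'=y^{p}-y^{q}$. You instead reduce to the cited $M=1$ Proposition by the substitution $Y(s)=\lambda y(\mu s)$ with $\lambda=M^{-1/(p-q)}$, $\mu=\lambda^{p-1}$; your algebra checks out (the rescaled delay $\tilde\tau=\tau/\mu>0$ is harmless since the Proposition holds for any delay, and all three hypotheses on $\Phi$ translate exactly to the stated hypotheses on $\phi$). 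Your approach is shorter, treats the Proposition as a black box rather than re-deriving its monotonicity lemma, and has the side benefit of showing that $\phi(0)\geq M$ is redundant: the normalization $\Phi(0)\geq 1$ already follows from $\phi(0)\geq M^{1/p}\phi(0)^{q/p}$, i.e.\ $\phi(0)\geq M^{1/(p-q)}$. It also sidesteps the paper's somewhat delicate final step, where passing from $y(t)^{p}>My(t-\tau)^{q}$ to $y(t)^{q}\geq My(t-\tau)^{q}$ requires an additional justification. Your fallback ``direct'' argument at the end is only a sketch --- the method-of-steps claim that $y$ stays increasing is precisely the nontrivial content of the Ezzinbi--Jazar lemma --- but it is not needed, since the rescaling argument is complete on its own.
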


\begin{proof}
   
The proof follows ideas in \cite{Ez06}. See Appendix \ref{proof_thm:l1k}.
\end{proof}
\begin{theorem}
\label{thm:t13}
Consider the VCM model \eqref{eq:1}. Then for initial data sufficiently large, that is $x_{0} > k_{max}\left( \frac{r}{a k_{min}}  + 1 \right)\frac{T^{2}}{a^{2}} $, solutions to 
\eqref{eq:1} blow-up in finite time, i.e., $\limsup_{t \rightarrow T^{*} < \infty} |x(t)| \rightarrow + \infty.$

\end{theorem}
\begin{proof}
    See Appendix \ref{proof_thm:t13}.
\end{proof}

\begin{remark}
As seen earlier, system \eqref{eq:1} shows the solution can blow up, and we look into the sensitivity of the solution with the parameter $a$. We investigate different scenarios to understand the blow-up in the solution. We study it in two different ways- first, we investigate the change in solution behavior with changing the initial condition, see figure \ref{fig:consta}, and secondly, we fix everything and change the parameter $a$, see figure \ref{fig:variable_a}. 
\end{remark}

\begin{figure}
\begin{subfigure}{.32\textwidth}
  \centering
  \includegraphics[width= 5.4cm, height=5cm]{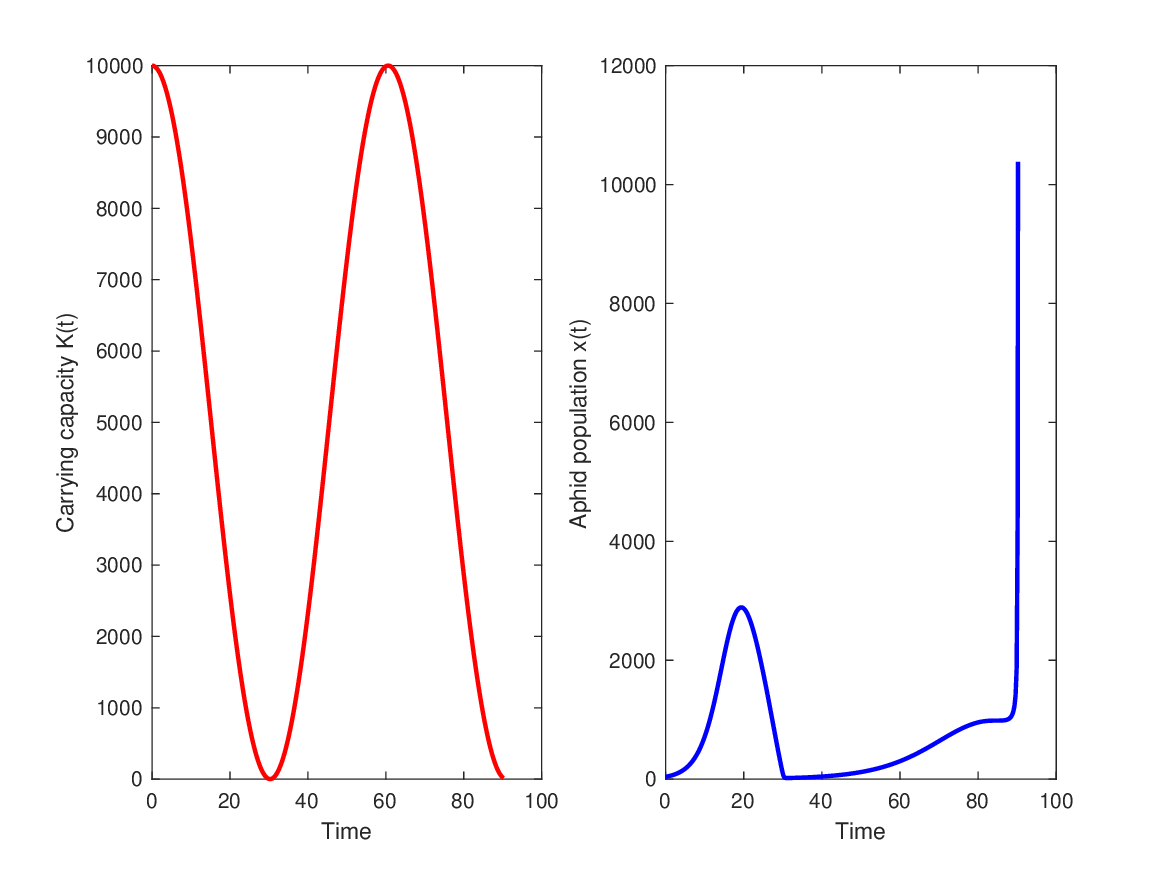}
  \caption{}
  \label{fig:sfig1}
\end{subfigure}%
\begin{subfigure}{.32\textwidth}
  \centering
  \includegraphics[width=5.4cm, height=5cm]{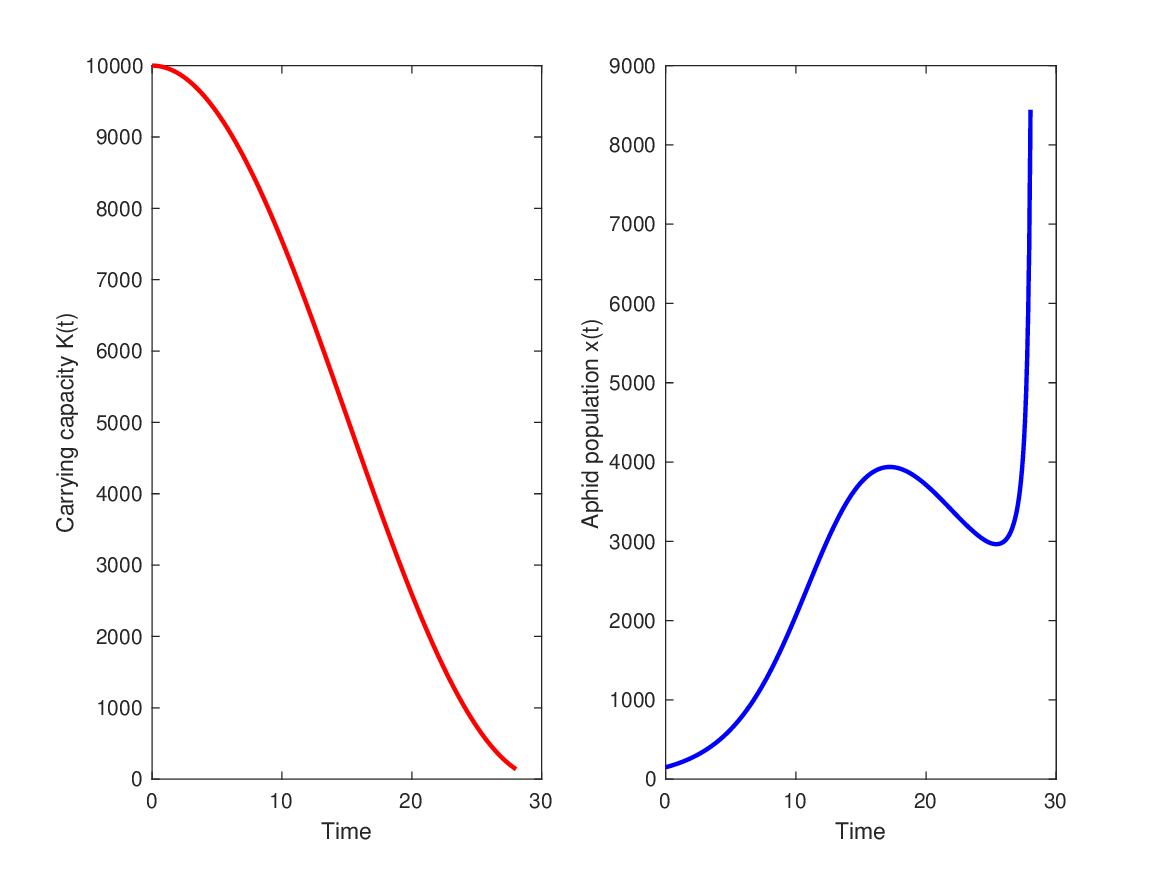}
  \caption{}
  \label{fig:sfig2}
\end{subfigure}
\begin{subfigure}{.32\textwidth}
  \centering
  \includegraphics[width= 5.4cm, height=5cm]{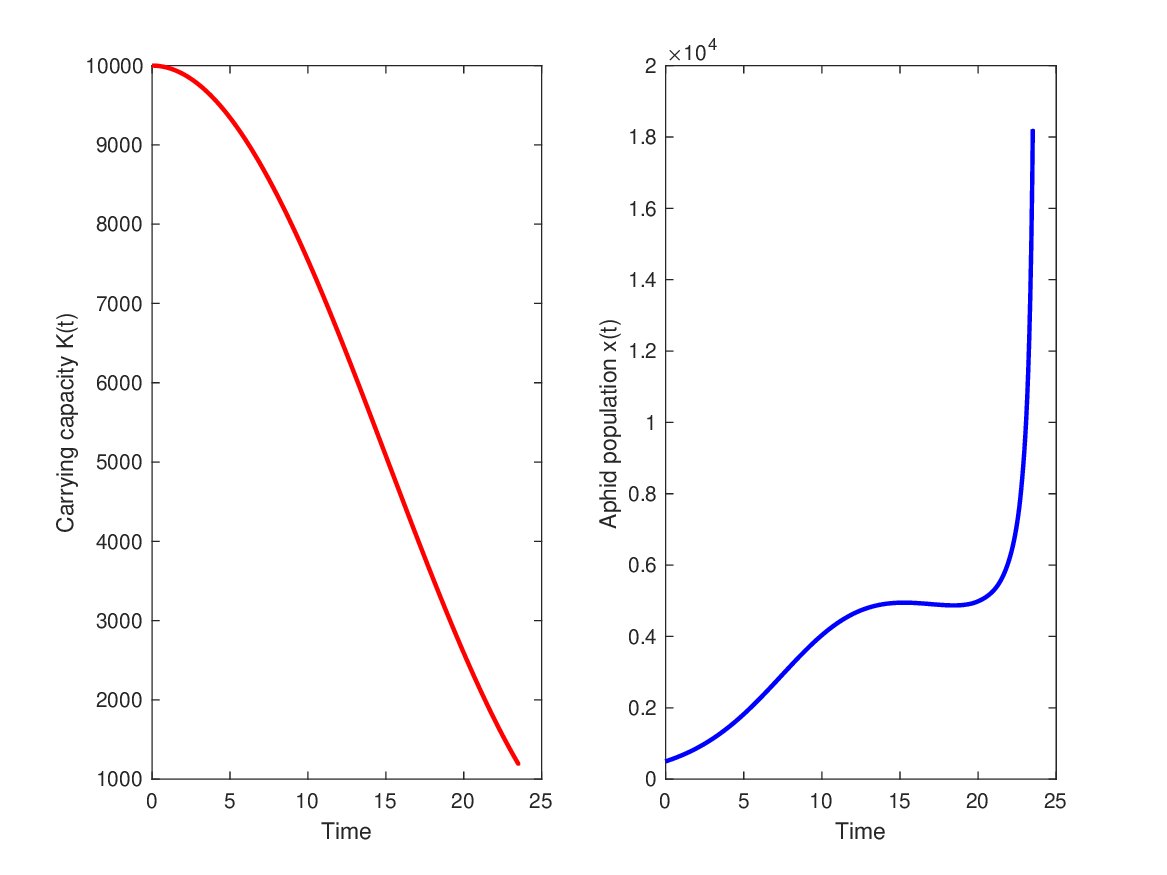}
  \caption{}
  \label{fig:sfig3}
\end{subfigure}
\caption{The time series figure shows the carrying capacity and finite time blow-up in the aphid population with variation in the initial population. The parameter set: $ K_{max} = 10000, K_{min} =1, d =.033, a = 0.000005, r=0.3$. The aphid population blows up at: Figure \ref{fig:sfig1} $x_0 = 40$, $t \approx 85$ with $x \approx 2000$, Figure  \ref{fig:sfig2} $x_0 = 150$, $t \approx 28$ with $x \approx 4800 $, Figure  \ref{fig:sfig3} $x_0 = 500$, $t \approx 23 $ with $x \approx 6800 $. 
}
\label{fig:consta}
\end{figure}

\begin{figure}
\begin{subfigure}{.32\textwidth}
  \centering
  \includegraphics[width= 5.4cm, height=5cm]{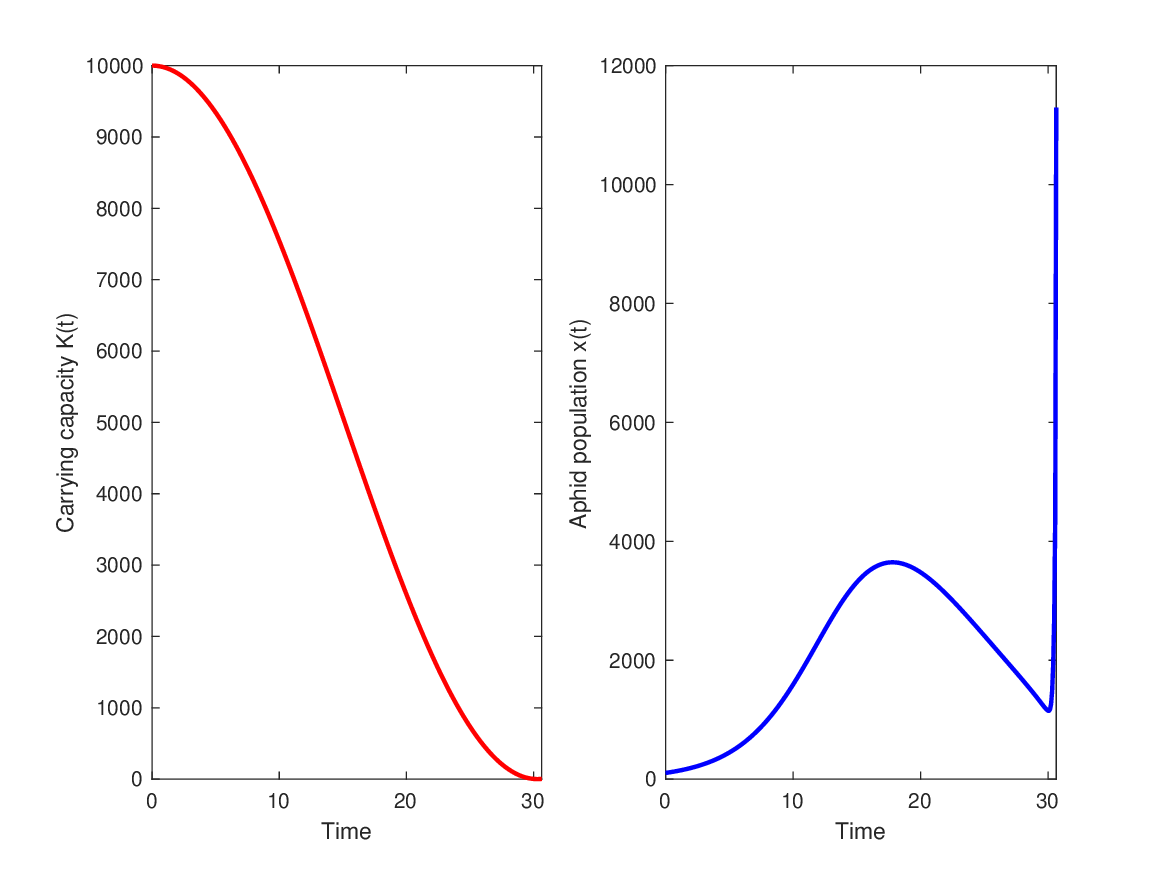}
  \caption{}
  \label{x0_105}
\end{subfigure}%
\begin{subfigure}{.32\textwidth}
  \centering
  \includegraphics[width= 5.4cm, height=5cm]{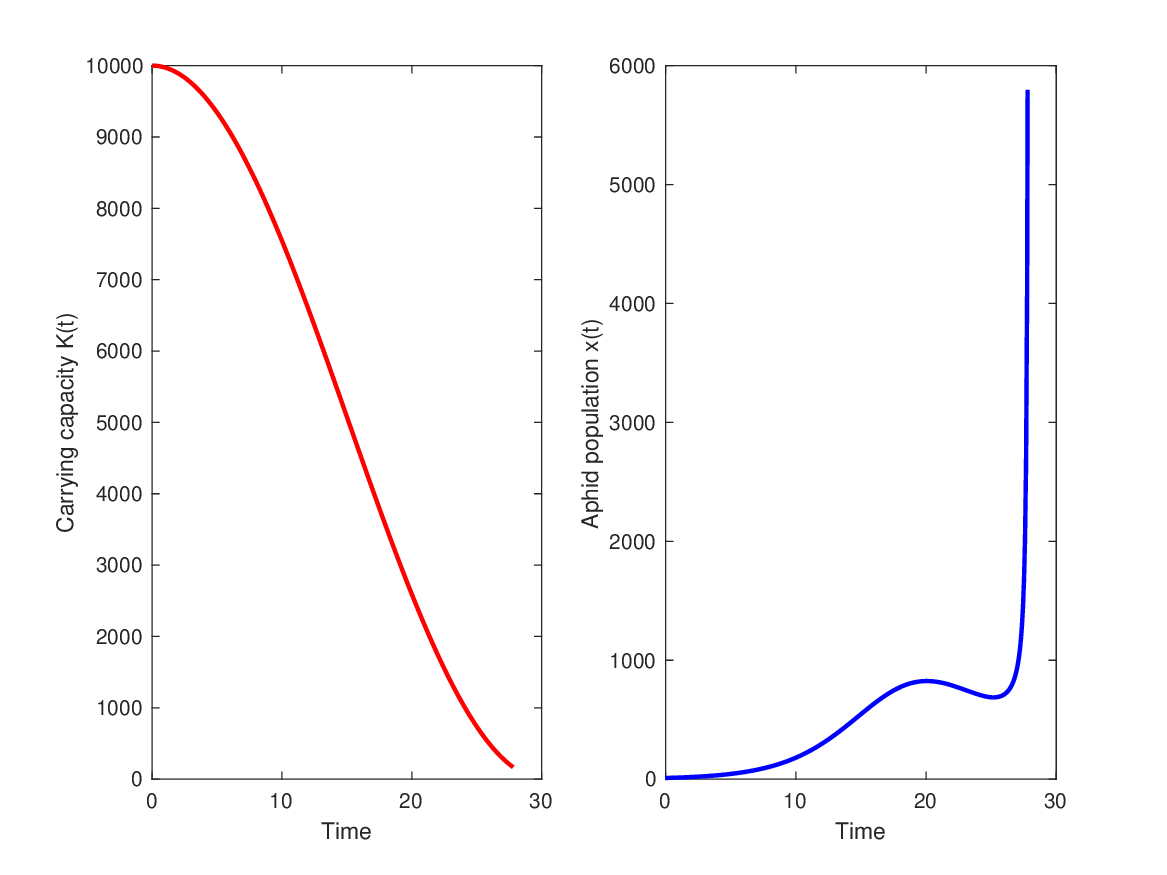}
  \caption{}
  \label{x0_10}
\end{subfigure}
\begin{subfigure}{.32\textwidth}
  \centering
  \includegraphics[width= 5.4cm, height=5cm]{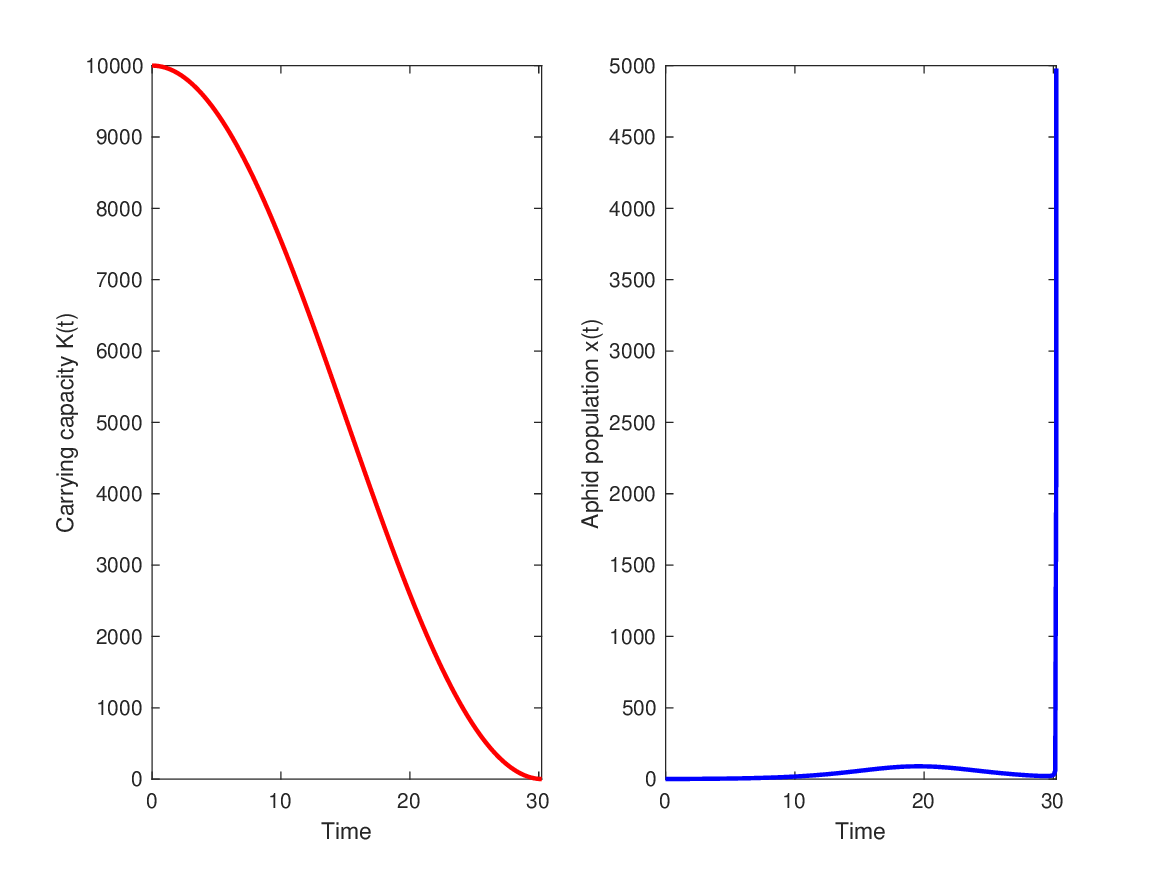}
  \caption{}
  \label{x0_1}
\end{subfigure}
\caption{The time series figure shows the carrying capacity and finite time blow-up with variation in $``a"$. The parameter set: $ K_{max} = 10000, K_{min} =1, d =.033, r=0.3$. For Figure \ref{x0_105} $ a = 0.000005$, $x_0 = 105$ and population blows up at  $t \approx 30 $, Figure \ref{x0_10} $ a = 0.00005$, $x_0 = 10$ and population blows up at $t \approx 27 $, Figure \ref{x0_1} $a= 0.0005$, $x_0 = 1$ and population blows up at  $t \approx 30 $.  
}
\label{fig:variable_a}
\end{figure}

\section{Adaptive behavior model}
\label{Adaptive behavior model}
In this section we propose an adaptive behavior model. Our premise is that
during the lifetime of the soybean aphid, the population is driven by variable carrying capacity as seen in model \eqref{eq:1}-\eqref{eq:22a}. However, just as the population explosion takes place at time $t=t_{bu}$ (i.e., the solution blows up starting from time $t_{bu}$) the population dynamics shifts to the classical Kindlmann model \eqref{eq:cl1}. In this sense the model is ``adaptive". The model is formulated is as follows:
\vspace{0.2cm}

a) \textbf{The variable carrying capacity dynamics period} ($0 \leq t < t_{ub}$): Before the solution blow up, the population evolves according to a variable carrying capacity $k(t)$ during the time interval $(0,t_{ub})$. The model is given by the system
\begin{equation}\label{model:part1}
    \begin{cases}
        \frac{dh}{dt} = a x \\
        \frac{dx}{dt} = (r - h) x \left(1 - \frac{x}{k(t)}\right)\\
        k(t)=(k_{max}-k_{min})\left ( \frac{\cos(d\pi t) +1}{2}\right)+k_{min}
    \end{cases}
\end{equation}

b) \textbf{The post-blowup period} ($t_{ub}\leq t\leq t_{end}$): Due to the onset of exponential growth of the population at $t=t_{ub}$, the population growth and competition dominate the effect of variable carrying capacity, and the population evolves as per classical hump-dynamics, which is described by 
\begin{equation}\label{model:part2}
    \begin{cases}
        \frac{dh}{dt} = a x \\
        \frac{dx}{dt} = (r - h) x 
    \end{cases}
\end{equation}
Where $h(t)$ is the cumulative population density of aphid biotype at time $t$; $x(t)$  is the population density at time $t$, $a$ is a scalar constant, and $r$ is the growth rate of the aphids. 

Thus it is critical to estimate the time $t=t_{ub}$, and ensure that \ref{model:part1}, does not blow up before $t=t_{ub}$. A strategy here is to derive a lower bound $T^{**}$, for the blow-up time of \ref{model:part1}, and then ensure that $t_{ub} < T^{**}$, thus yielding existence on $[0,t_{ub}]$.

The following lemma enables this,

\begin{lemma}
Consider model \ref{model:part1}, with a given initial condition $x_{0}$. Then the solution $x(t)$ will only blow up after a finite time, $T^{*} = r + \sqrt{r^{2}+\frac{2a}{(x_{0})^{2}}}$.
\end{lemma}

\begin{proof}
    By direct comparison we have,
    \begin{equation}
        \frac{dx}{dt} \leq rx + h(t) x^{2} \leq rx^{3} + (a t x)x^{2} = (r+at)x^{3}.
    \end{equation}
    This follows via monotonocity of $x$, positivity, and the fact that $h(t) \leq a t\sup_{[0,t]} x(s) = a t x(t)$.
    We can compute the blow-up time of the supersolution by solving, $\frac{dx}{dt} = (r+at)x^{3}$. Separation of variables entails,

    \begin{equation}
        -\frac{1}{(x(t))^{2}} = rt + a \frac{t^{2}}{2} -  \frac{1}{(x_{0})^{2}} 
    \end{equation}
    and,
     \begin{equation}
       x(t)= \frac{1}{\sqrt{\frac{1}{(x_{0})^{2}} - rt - a \frac{t^{2}}{2}}}
    \end{equation}
    The above blows up when $T^{*} = r + \sqrt{r^{2}+\frac{2a}{(x_{0})^{2}}}$.
    By direct comparison we have that, the blow-up time $T^{**}$ of \ref{model:part1}, is such that 
$T^{*} < T^{**}$.
    
    Thus choosing $t_{ub} < T^{*} < T^{**}$ provides the requisite lower bound on the blow-up time for \ref{model:part1}. 
\end{proof}
\begin{remark}
    From the above lemma a convenient estimate is $\boxed{t_{ub} = 2r}$ as $2r < r+\sqrt{r^{2}+\frac{2a}{(x_{0})^{2}}}$.
\end{remark}

\begin{theorem}
    System \ref{model:part1}-\ref{model:part2} is well-posed.
\end{theorem}

\begin{proof}
To show the well-posedness of the given model \ref{model:part1}-\ref{model:part2} we show the following:
 i) Existence of the solution,
ii) Uniqueness of the system and iii) Continuous dependence on initial condition and parameters. For, \textit{existence of the solution}:
Let $k(t) \in C^\infty$. The time where system \ref{model:part1} shifts to system \ref{model:part2} is defined as $t_{ub}=\inf{\{ t_n | x(t_n) \to \infty \}} $. For system \ref{model:part2} at $t \in [t_{bu},t_{end}]$, the system are polynomials on $x$ and $h$, so are smooth. For system \ref{model:part1} at $t \in [0,t_{ub})$, as $k(t)$ is smooth and bounded, the system is smooth if $x$ is finite. It is easy to see that $x$ is finite, System \ref{model:part2} is bounded, and system \ref{model:part1} is bounded for time $t<t_{ub}$. At time $t=t_{ub}$, if $k(t_{ub})$ is finite then the system switches from $(r - h) x \left(1 - \frac{x}{k(t)}\right)$ to $(r - h) x $, so the transition is smooth.
Now, if $k(t_{ub}) \to \infty$ then the transition is continuous. So, the system is continuous. For $t \in [0,t_{ub})$, We have, $\frac{\partial h'}{\partial h}=0$,$\frac{\partial h'}{\partial x}=a$,
$\frac{\partial x'}{\partial h}=-x(1-\frac{x}{k(t)})$ and $\frac{\partial x'}{\partial x}=(r-h)(1-\frac{2x}{k(t)})$ are bounded as $x$ and $h$ are finite at $t \in [0,t_{ub})$. So system \ref{model:part1} is locally Lipschitz. For $t \in [t_{ub},t_{end}]$, system is linear on $x$ and $h$, so system \ref{model:part2} is globally Lipschitz. By Picard-Lindelöf theorem, as the system is continuous and locally Lipschitz, the system has local existence of solutions. 

For \textit{uniqueness of solutions and continuous dependence on initial condition}:
The system \ref{model:part1}-\ref{model:part2} is locally Lipschitz. So, $(h,x)$ are unique in a neighborhood for any initial condition $(h_0,x_0)$. At $t=t_{ub}$, the system is continuous as $k(t)$ is smooth and bounded, so uniqueness is preserved. So, the solutions are unique for $t \in [0,t_{end}]$. As $k(t)$ is smooth in $k_{max},k_{min}$, and the system polynomials are smooth in $a,r$, so a small change in the parameter leads to a small change in the solution. Thus, continuous dependence on initial conditions and parameters is established.
Thus, system, \ref{model:part1}-\ref{model:part2} is well-posed for $t\in [0,t_{end}]$.
\end{proof}

\begin{figure}
    \centering
\includegraphics[width = 6.5cm]{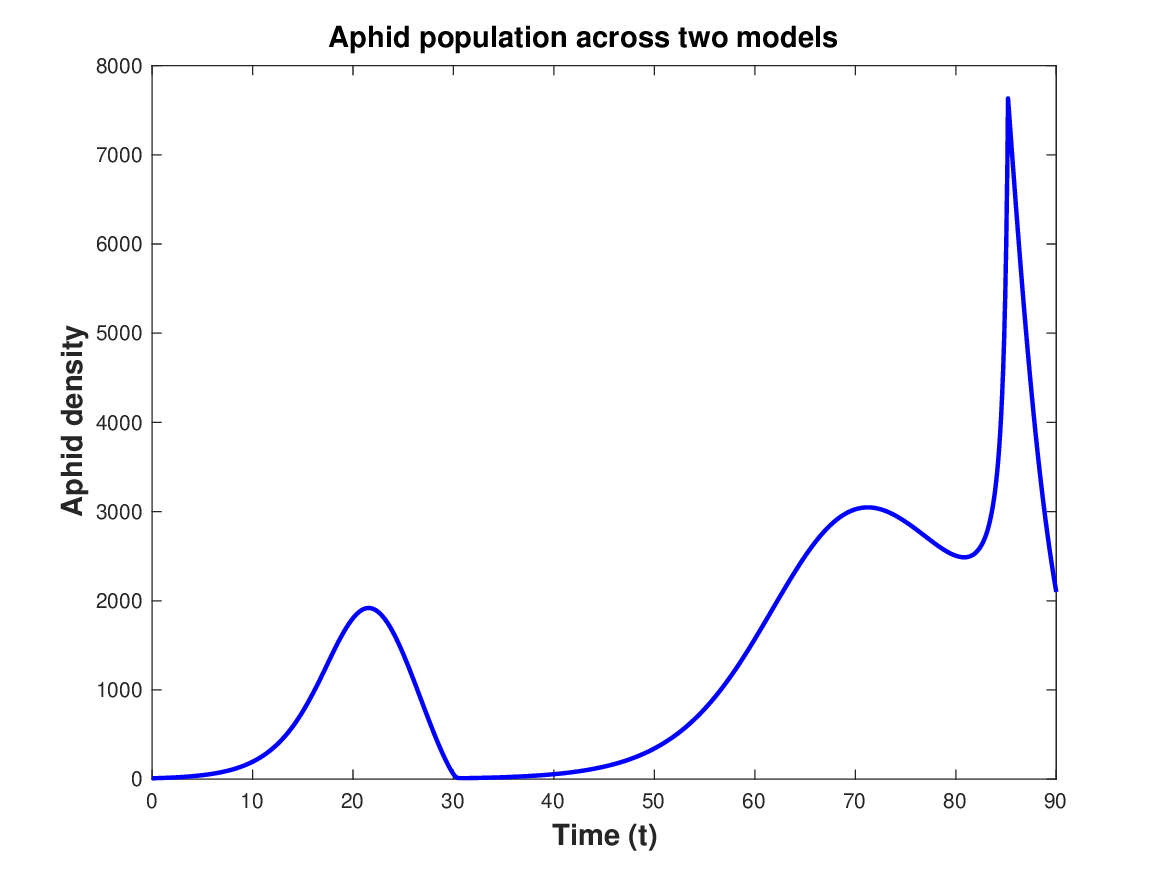}
    \caption{Multiple peaks in the aphid population can be seen for the combined model \eqref{model:part1}-\eqref{model:part2}. The combined model was simulated for a time span of 90 days, where $t_{ub}=85.2$ is when the switch was made. Until this time step $t_{ub}$, the model \eqref{model:part1} was simulated, and then from the next time step until $t_{end}=90$, the model represented by equations \eqref{model:part2} was simulated. The initial populations at $t=0$ were chosen as $h(0)=0,x(0)=10,$ and then the initial condition for model \eqref{model:part2} was taken from the populations at the last evaluated time step so, $h_{ub}=0.5040, x_{ub}=7632.6999$. The parameters used are $K_{max} =10000, K_{min} =1, d =.033, a = 0.000005, r=0.3.$}
    \label{fig:adaptive_model}
\end{figure}

\section{Non-autonomous variable fitness Model}
\label{Non-autonomous Model}
A biologically realistic way of modeling the in-season population dynamics is by building upon the empirical evidence that the environmental factors like flooding and drought affect the host plant quality as seen in Lewis et.al \cite{lewis2025host}. We develop a non-autonomous model incorporating time-dependent plant fitness and pest growth rate. We assume the functions to be periodic, as observed in nature, to be $T$-periodic. We assume the plant fitness as $a(t)$ and the time-dependent growth rate of the pest population as $r(t)$. Thus, we formulate a non-autonomous model as follows :

	\begin{equation}\label{am1}
		\begin{cases}
			\frac{dh(t)}{dt} = a(t)x(t), \\
			\frac{dx(t)}{dt} = \big(r(t) - h(t)\big)x(t)
		\end{cases}
	\end{equation}
	where $a(t)$ and $r(t)$ are positive $T$-periodic functions.
	
	\subsection{Nonexistence of $T$-Periodic Solutions}
	\begin{lemma}
    \label{lem:tp1}
		System \eqref{am1} has no positive $T$-periodic solution, i.e., there does not exist a pair of continuously differentiable functions $(h(t), x(t))$ such that
		\[
		h(t+T) = h(t), \quad x(t+T) = x(t) \quad \text{for all } t \in \mathbb{R}.
		\]
	\end{lemma}
	
	\begin{proof}
		Suppose, for contradiction, that there exists a positive $T$-periodic solution $(h(t), x(t))$ of system \eqref{am1}, with $x(t) > 0$ for all  $t \in \mathbb{R}$, and $h(t) \geq 0$.
		
		From the first equation in \eqref{am1},
		\[
		\frac{dh(t)}{dt} = a(t) x(t) > 0,
		\]
		since both $a(t) > 0$ and $x(t) > 0$. Hence, $h(t)$ is strictly increasing. Therefore, for any $T > 0$,
		\[
		h(t + T) = h(t) + \int_t^{t+T} a(s) x(s)\, ds > h(t).
		\]
		This contradicts the assumption that $h(t)$ is $T$-periodic. Hence, the nonexistence of a periodic solution is proved.
	\end{proof}
	
	\begin{remark}
		This result holds regardless of whether $a(t)$ or $r(t)$ are constant or periodic, as long as $a(t)>0, a>0, x(t)>0$.
		$\frac{dh}{dt}>0$ forces strict growth in $h(t)$, which prevents periodicity. 
	\end{remark}
	
	\subsection{Finite-Time Transient Periodicity and Eventual Extinction}
	\begin{lemma} \label{transient_proof}
		There exists a finite time $T^{*}>0$, such that 
		\begin{enumerate}
			\item[(i)] $t\in [0,t^{*}]$, $x(t)$ exhibits multiple oscillations or local extrema, reflecting the periodicity of $a(t)$, $r(t)$.
			\item[(ii)] For all $t > T^{*}$, we have $h(t) > sup \  r(t)$, and therefore
			\[
			\frac{dx}{dt} = (r(t) - h(t))x(t) < 0, \quad \text{implying } x(t) \to 0 \text{ as } t \to \infty.
			\]
		\end{enumerate}
	\end{lemma}
	
\begin{proof}
		Since $ x(t) > 0 $, and $ a(t) > 0 $, it follows that
		\[
		\frac{dh(t)}{dt} = a(t)x(t) > 0,
		\]
		clearly, $h(t)$  is strictly increasing.
		Hence, there exists a time $T^* > 0$ such that
		\[
		h(t) > \sup r(t) \quad \text{for all } t > T^*.
		\]
		For $t \in [0, T^*]$, the function $r(t) - h(t)$ fluctuates due to the periodic nature of $r(t)$ and the fact that $h(t)$ starts small. Therefore, $x(t)$ experiences alternating growth and decay depending on the sign of $r(t) - h(t)$, giving rise to transient periodicity with multiple peaks. However, after $t > T^{*}$, the inequality $h(t) > r(t)$ ensures
		\[
		\frac{dx}{dt} = (r(t) - h(t))x(t) < 0,
		\]
		leading to exponential decay of $x(t)$ to zero. Thus, the periodic behavior is only maintained transiently on $[0, T^*]$.
	\end{proof}

	\begin{figure}[h]
\begin{subfigure}{.5\textwidth}
\centering
\includegraphics[width= 7cm, height=5cm]{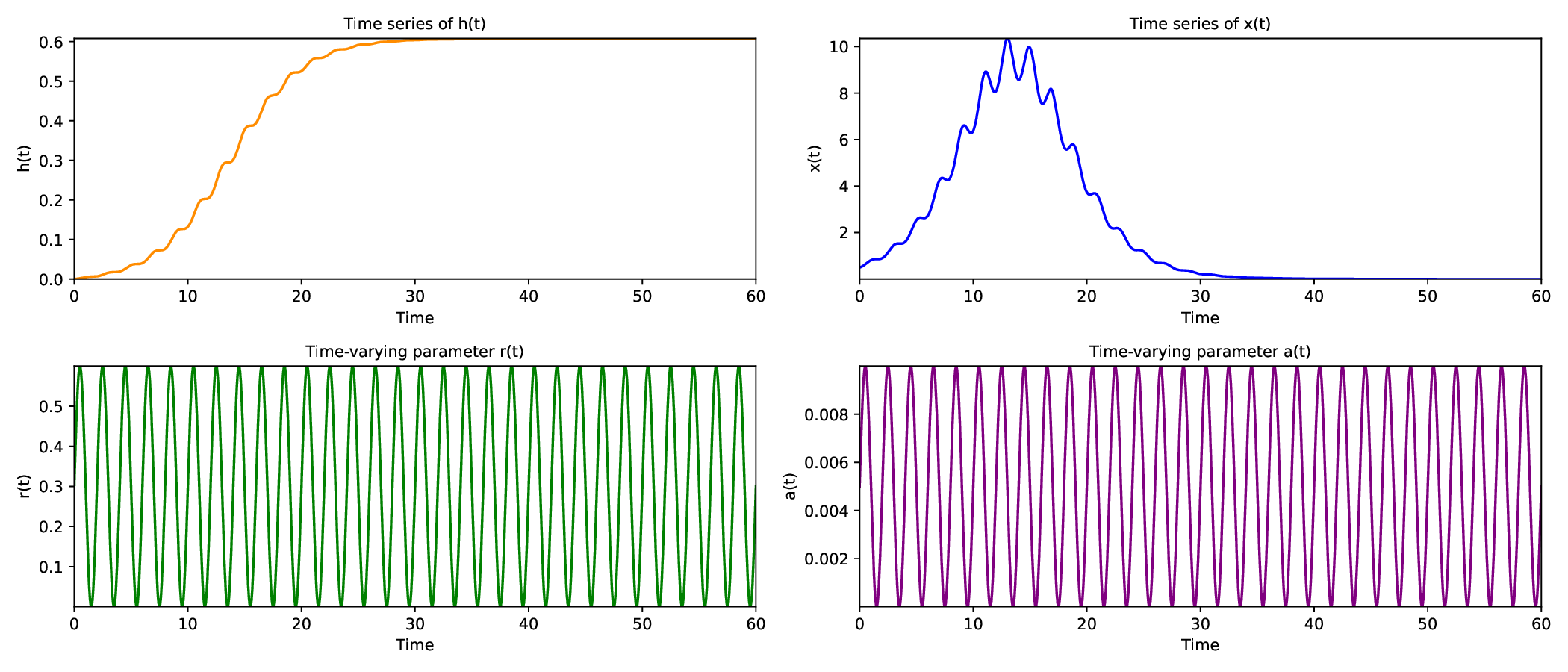 }
\subcaption{}
\label{f1}
\end{subfigure}
\begin{subfigure}{.5\textwidth}
\centering
\includegraphics[width= 6cm, height=5cm]{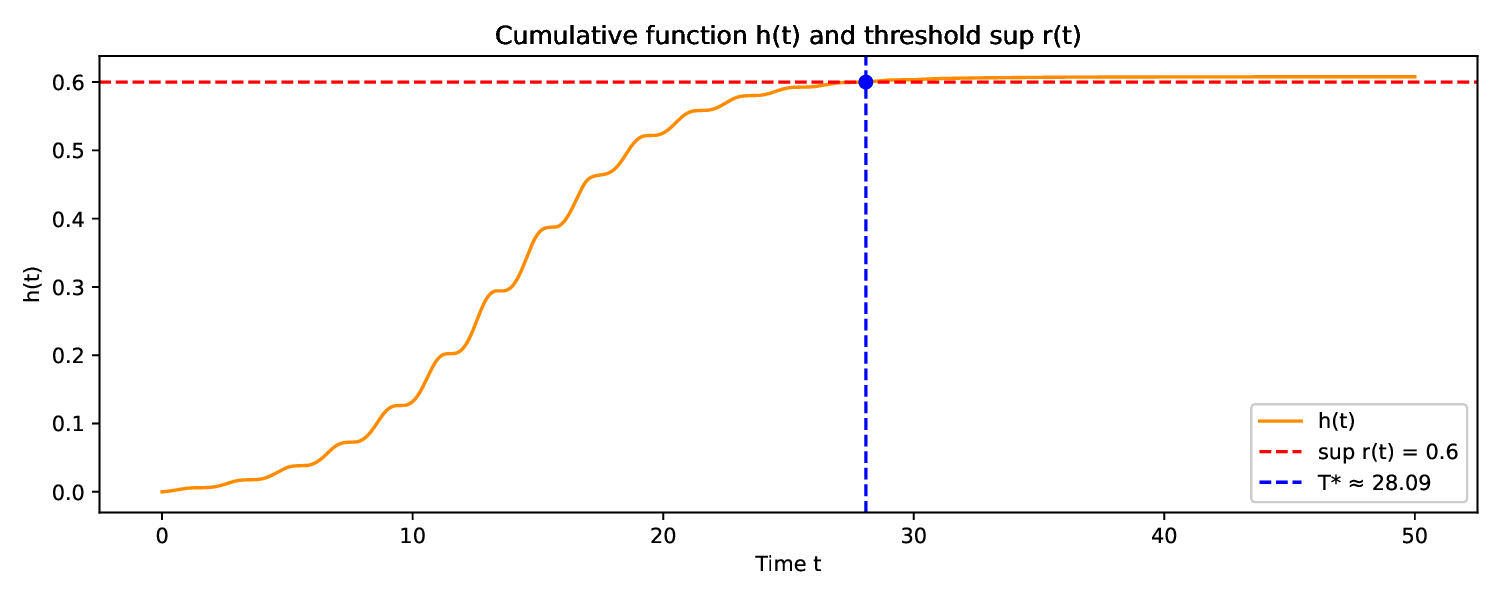 }
\subcaption{}
\label{f2}
\end{subfigure}

\begin{subfigure}{.5\textwidth}
\centering
\includegraphics[width= 6cm, height=5cm]{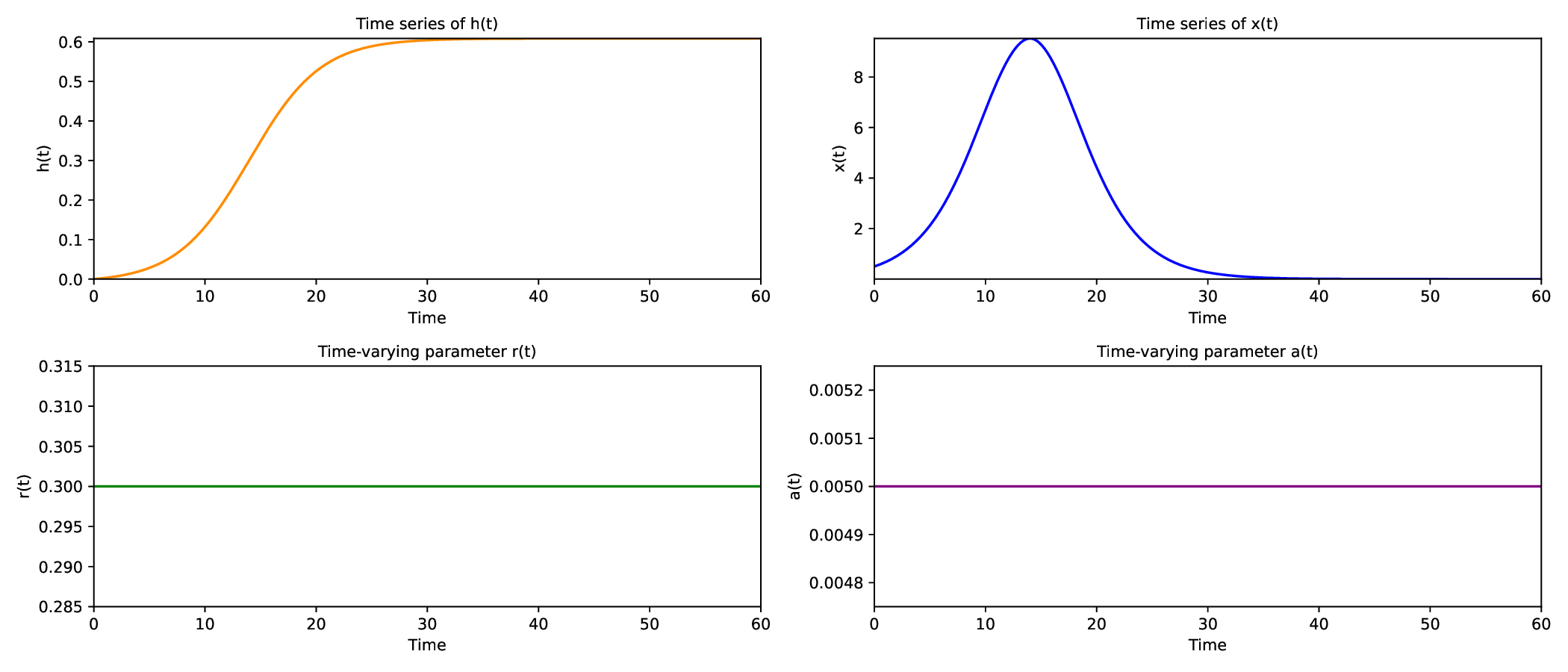 }
\subcaption{}
\label{f3}
\end{subfigure}
\begin{subfigure}{.5\textwidth}
\centering
\includegraphics[width= 6cm, height=5.3cm]{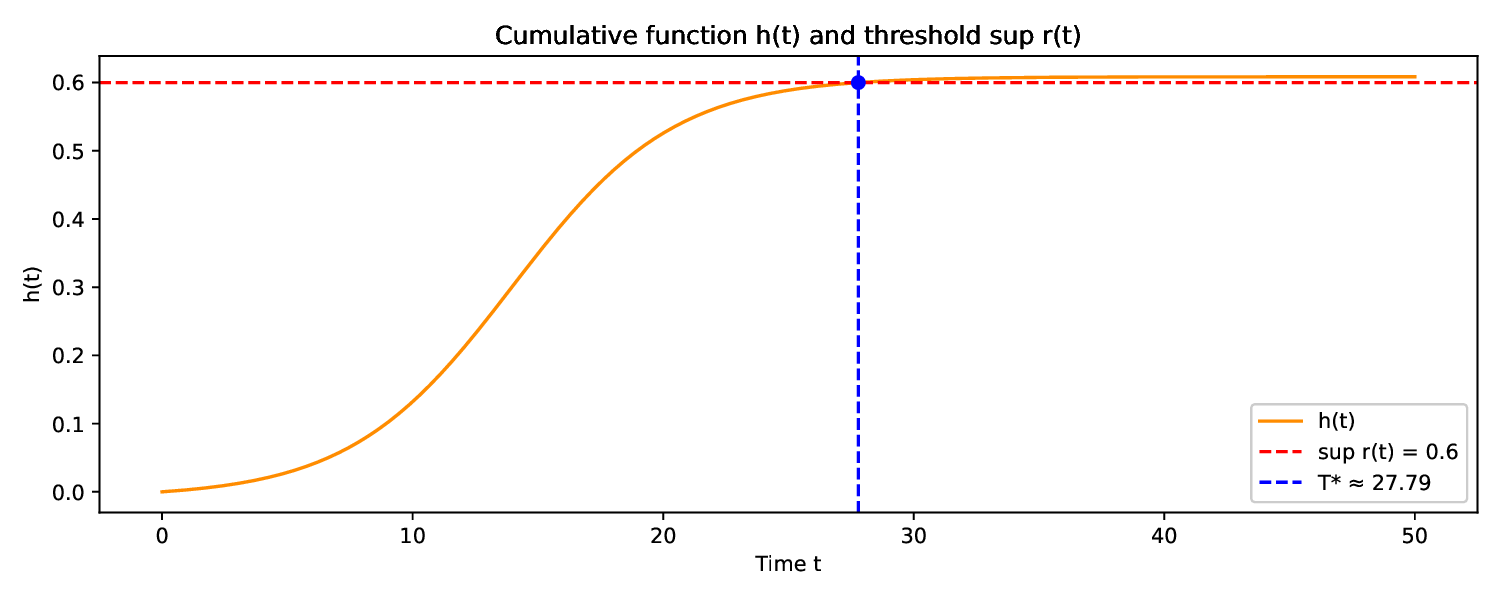 }
\subcaption{}
\label{f4}
\end{subfigure}

\caption{For Figure  \ref{f1}: The time series of the cumulative aphid population $h(t)$, the soybean aphid population $x(t)$ and the time dependent parameters $a(t)$ and $r(t)$ are shown in the figure for model \eqref{am1}. The parameters and initial conditions taken fixed for all the figures are $r_0=0.3, a_0=0.005, \ \omega=q\pi,\ q=0.3,\ x(0)=0.5, \ h(0)=0, r(t)= r_0 (1+\sin{\omega t}), a(t)= a_0 (1+\sin{\omega t})$.  It is seen that the aphid population shows a transient periodic solution with multiple peaks, with both $r(t)$ and $a(t)$ being time-dependent. For Figure \ref{f2}: Time series of the cumulative function $h(t)$ (orange curve) along with the horizontal threshold line $\sup r(t) = 0.6$ (red dashed). The vertical dashed line indicates the critical time $T^* \approx 28.09$, at which $h(t)$ first exceeds  $\sup r(t)$. This marks the end of the transient periodic regime and the beginning of monotonic decay in the population $x(t)$. For Figures \ref{f3}--\ref{f4}: Simulations when both parameters are taken as constants, i.e., $r(t)=r_0$ and $a(t)=a_0$. In this case, the system does not exhibit any transient periodic dynamics.}
\label{am_figure}
\end{figure}

\begin{remark}
Cumulative variable $h(t)$ increases monotonically and never decreases, confirming it cannot be periodic. The population $x(t)$ shows transient oscillations (multi-peaked pattern), but each peak gets smaller, showing decay over time.
	\end{remark}
	
	\begin{remark}
		For the specific case where $a(t) = 0.005(1 + \sin(\pi t))$, $r(t) = 0.3(1 + \sin(\pi t))$, with initial conditions $h(0) = 0$, $x(0) = 0.5$, numerical simulation (see Figure \ref{f2}) shows that the cumulative variable $h(t)$ exceeds the maximum of $r(t)$ (i.e., $\sup r(t) = 0.6$) at approximately $T^* \approx 28.09.$ This confirms that the solution $x(t)$ exhibits transient periodic dynamics on the interval $[0, T^*]$, after which it decays monotonically to zero due to the dominance of the cumulative term $h(t)$.
        	\end{remark}

\section{Economic Threshold (ET) and Economic Injury Level (EIL)}
\label{ET_EIL}
In this section, we present various simulations and a table summarizing results for models \eqref{eq:cl1}, \eqref{model:part1}-\eqref{model:part2}, and
 \eqref{am1}. The common parameters (the scaling parameter $a$ and the growth rate of aphids $r$) were kept the same, and we noted when soybean aphids surpass the ET ($250$ aphids per plant)  and EIL level ($674$ aphids per plant) \cite{ragsdale2011ecology} for all three models.  

 For Figure~\ref{et_eil_2}, parameters were chosen so that the average aphid population in model~\eqref{eq:cl1} fell below the ET. The same parameter set was then used to evaluate the dynamics and thresholds for models \eqref{model:part1}–\eqref{model:part2} and \eqref{am1}. In both Figures~\ref{et_eil} and \ref{et_eil_2}, models \eqref{model:part1}–\eqref{model:part2} and \eqref{am1} show multiple peaks in aphid density.

Table \ref{table:et_eil_levels}-\ref{table:et_eil_levels_2} summarize these results for figures \ref{et_eil} and \ref{et_eil_2} respectively.  We also report the peak aphid population and average aphid population over a single season for all models. It can be seen that \eqref{model:part1}-\eqref{model:part2} achieved the lowest peak aphid population; however, this did not translate into a low seasonal average because multiple peaks occurred at high densities. Whereas model \eqref{eq:cl1} and \eqref{am1} exhibited higher peaks, their seasonal averages remained within the same range as each other. 
\begin{table}
 \centering
\begin{tabular}{|l|c|c|c|c|}
\hline
\textbf{Model} & \textbf{Peak Aphid Population } & \textbf{ET exceeded on} & \textbf{EIL exceeded on} & \textbf{Average Population} \\
\hline
\eqref{eq:cl1}  & 9004.2221 (on Day 27)  & $\approx$ Day 11 & $\approx$ Day 15 & 1332.014225\\
 \hline
\eqref{model:part1}-\eqref{model:part2}& 7632.6999 (on Day 85)  & $\approx$ Day 11 &  $\approx$ Day 14 & 1358.530699 \\
 \hline
\eqref{am1} & 10150.4484 (on Day 23)  & $\approx$ Day 9 &  $\approx$ Day 14 & 1316.262857
 \\
\hline
\end{tabular}
\caption{ The table shows peak population, the day when Economic Threshold (ET) and Economic Injury Level (EIL) were crossed, and average aphid population over a single season for Figure \ref{et_eil}. The peak aphid population for model \eqref{model:part1}-\eqref{model:part2} is calculated by combining the aphid population for the complete time span across two models. }
\label{table:et_eil_levels}
\end{table}
 \begin{figure}
\begin{subfigure}{.35\textwidth}
  \centering
  \includegraphics[width= 6.3cm, height=5cm]{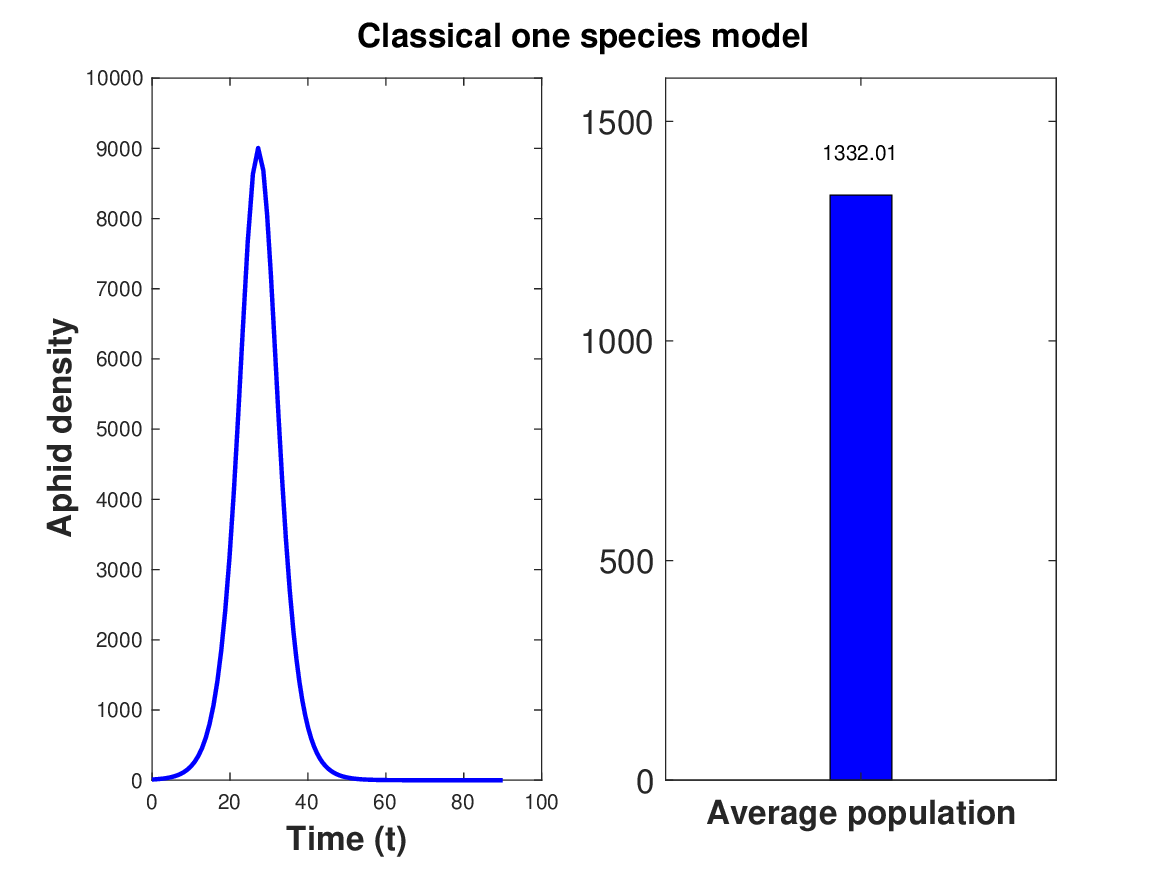}
  \caption{}
  \label{fig:et_fig1}
\end{subfigure}%
\begin{subfigure}{.35\textwidth}
  \centering
  \includegraphics[width= 6.3cm, height=5cm]{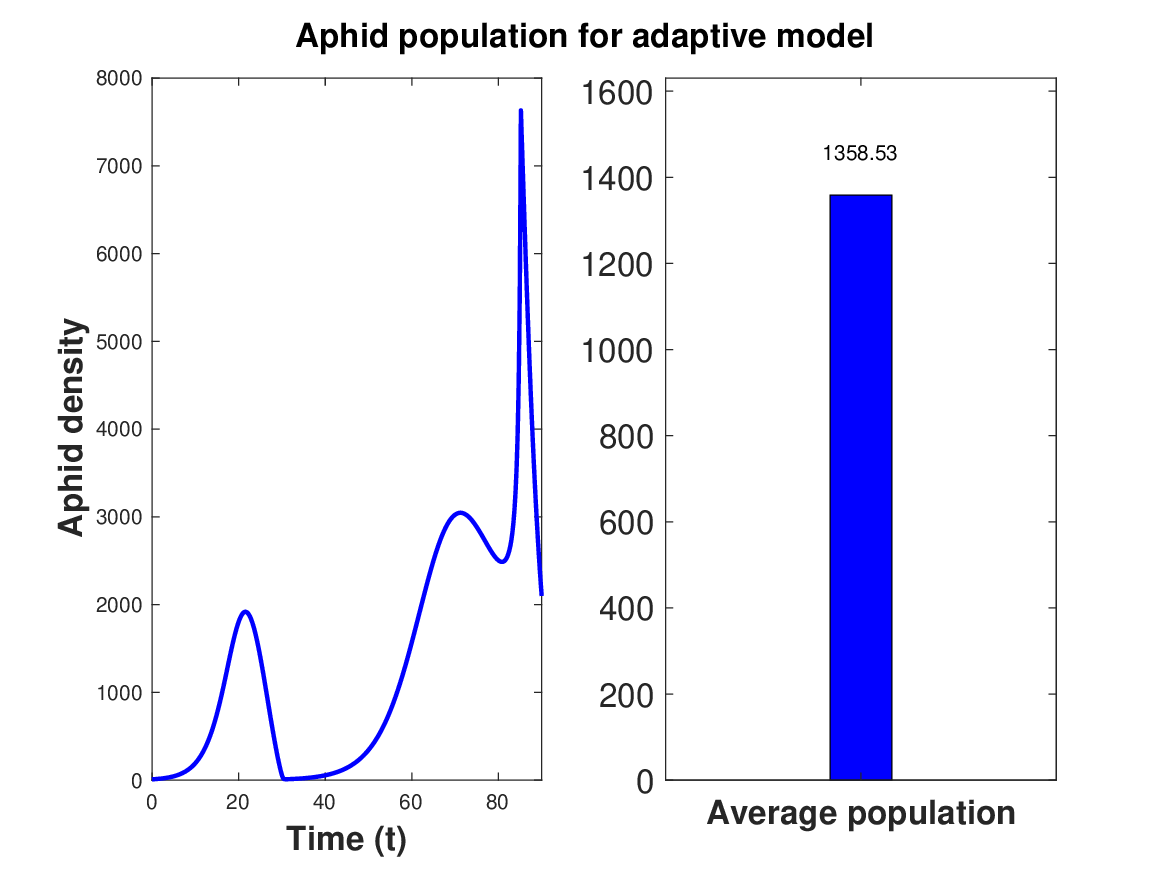}
  \caption{}
  \label{fig:et_fig2}
\end{subfigure}%
\begin{subfigure}{.35\textwidth}
  \centering
  \includegraphics[width= 6.3cm, height=5cm]{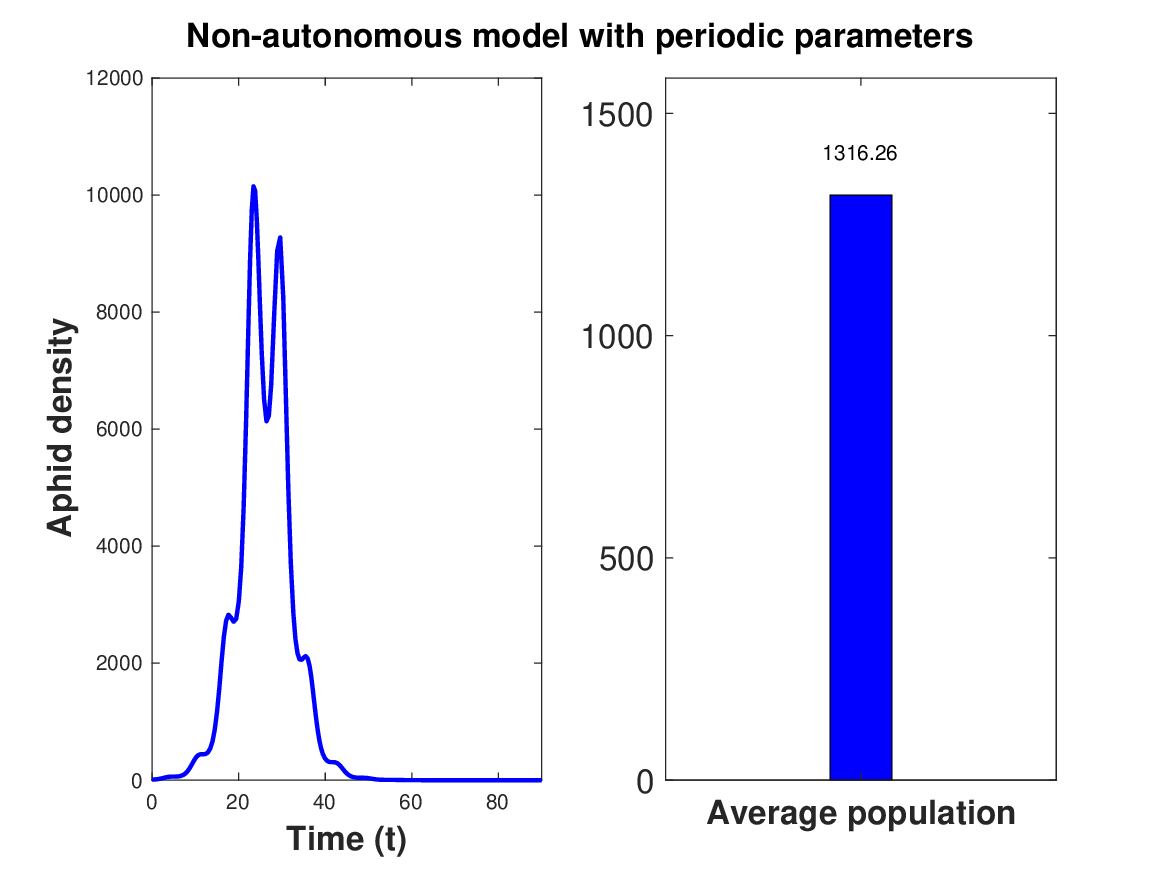}
  \caption{}
  \label{fig:et_fig3}
\end{subfigure}%
\caption{The common parameters used for all three models are $r=r_0=0.3, a=a_0=0.000005$. The initial population and time span are also kept the same with $h(0)=0,x(0)=10$ and $t=[0,90]$. The parameters for the adaptive model \eqref{model:part1}-\eqref{model:part2} are $K_{max} = 10000, K_{min} =1, d =.033$ and for non-autonomous model with periodic parameters \eqref{am1}, $r(t)= r_0 (1+\sin{\omega t}), a(t)= a_0 (1+\sin{\omega t}), q=0.3, \omega=q\pi$. In Figure \ref{fig:et_fig1}, it can be seen that aphid populations exhibit boom-bust dynamics with one peak, whereas for Figures \ref{fig:et_fig2} and \ref{fig:et_fig3}, multiple peaks can be seen throughout the time span. The peak aphid population for Figure \ref{fig:et_fig1} and Figure \ref{fig:et_fig3} occurs above $9000$; however, for Figure \ref{fig:et_fig2}, the peak is achieved around $7600$ and towards the end of the season. The Economic Injury level (EIL) was crossed around day $14$ for all three figures.} 
\label{et_eil}
\end{figure}
\begin{table}
 \centering
\begin{tabular}{|l|c|c|c|c|}
\hline
\textbf{Model} & \textbf{Peak Aphid Population } & \textbf{ET exceeded on} & \textbf{EIL exceeded on} & \textbf{Average Population} \\
\hline
\eqref{eq:cl1}  & 2034.1058 (on Day 15)  & $\approx$ Day 7 &   $\approx$ Day 10 & 199.803415\\

 \hline
\eqref{model:part1}-\eqref{model:part2} & 1741.6799 (on Day 15)  & $\approx$ Day 7  & $\approx$ Day 10 & 244.001675 \\
 \hline
\eqref{am1} & 2586.5328 (on Day 16)  & $\approx$ Day 7  & $\approx$ Day 8 & 200.082577 
 \\
\hline
\end{tabular}
\caption{ The table shows peak population, the day Economic Threshold (ET) and Economic Injury Level (EIL) were crossed, and average aphid population over a single season for Figure \ref{et_eil_2}. The peak aphid population for model \eqref{model:part1}-\eqref{model:part2} is calculated by combining the aphid population for the complete time span across two models. }
\label{table:et_eil_levels_2}
\end{table}
\begin{figure}
\begin{subfigure}{.35\textwidth}
  \centering
  \includegraphics[width= 6.3cm, height=5cm]{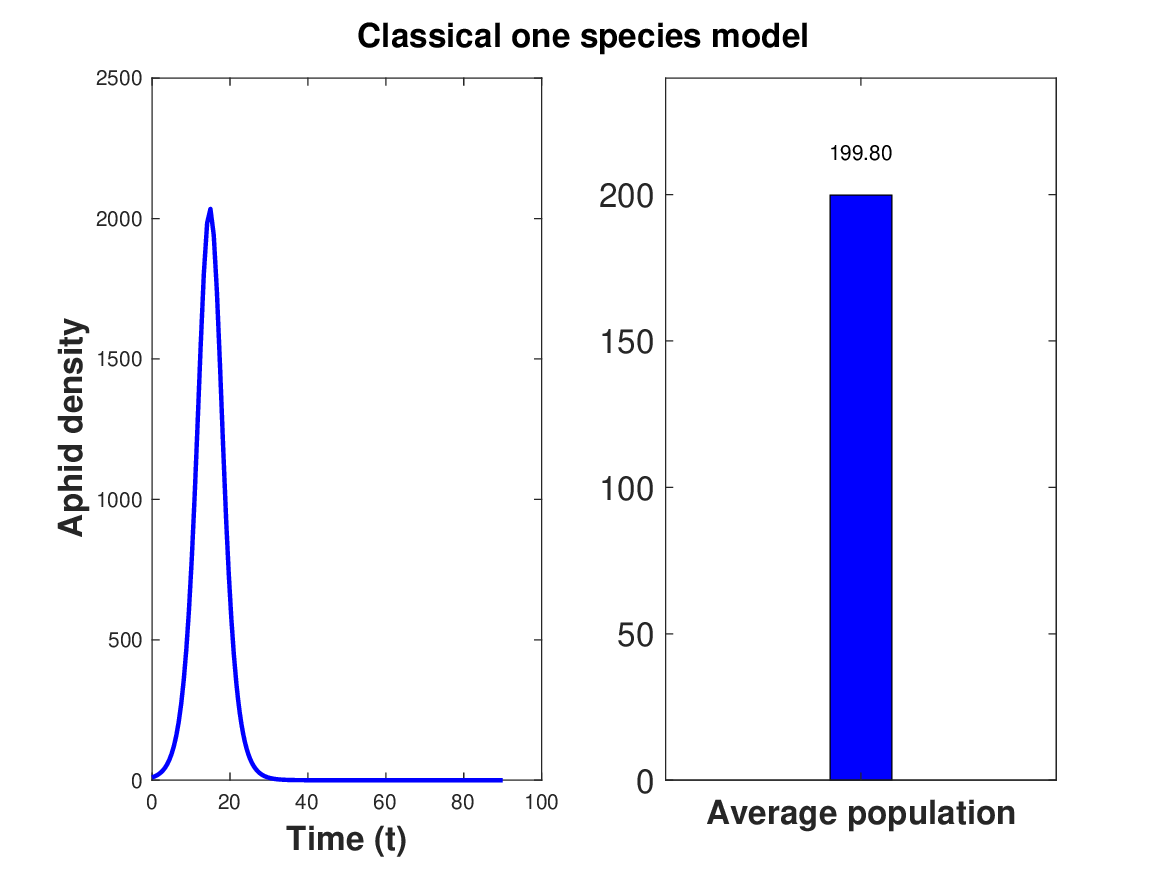}
  \caption{}
  \label{fig:et_fig_1}
\end{subfigure}%
\begin{subfigure}{.35\textwidth}
  \centering
  \includegraphics[width= 6.3cm, height=5cm]{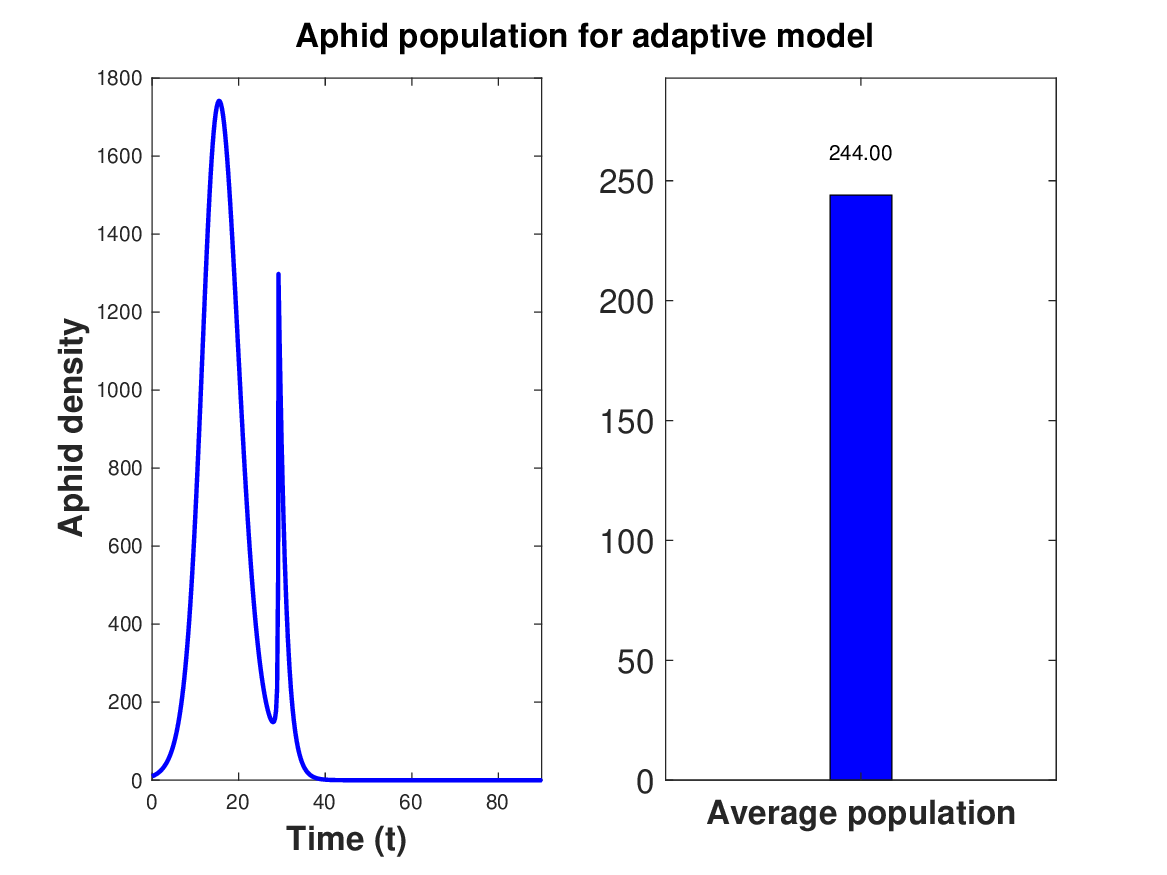}
  \caption{}
  \label{fig:et_fig_2}
\end{subfigure}%
\begin{subfigure}{.35\textwidth}
  \centering
  \includegraphics[width= 6.3cm, height=5cm]{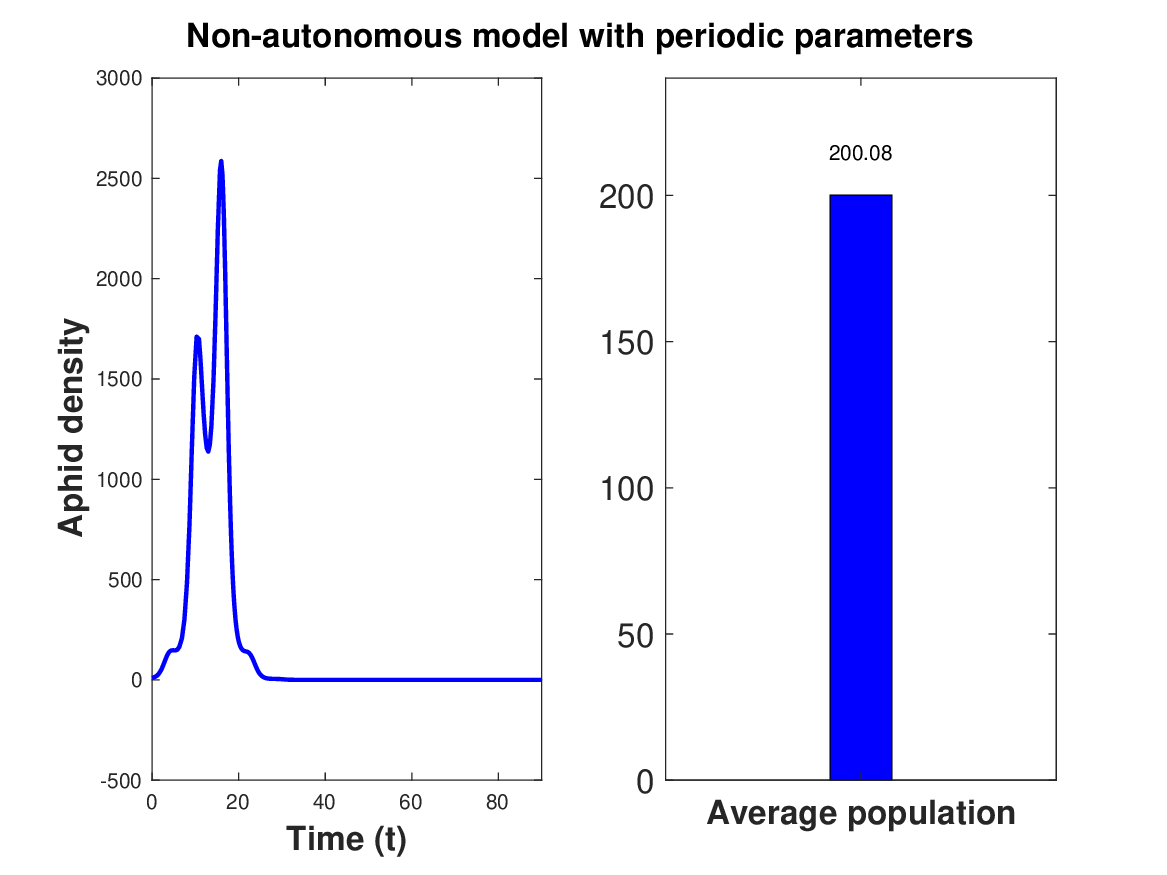}
  \caption{}
  \label{fig:et_fig_3}
\end{subfigure}%
\caption{The common parameters used for all three models are $r=r_0=0.45, a=a_0=0.00005$. The initial population and time span are also kept the same with $h(0)=0,x(0)=10$ and $t=[0,90]$. The parameters for the adaptive model \eqref{model:part1}-\eqref{model:part2} are $K_{max} = 10000, K_{min} =1, d =.033$ and for non-autonomous model with periodic parameters \eqref{am1}, $r(t)= r_0 (1+\sin{\omega t}), a(t)= a_0 (1+\sin{\omega t}), q=0.3, \omega=q\pi$. In Figure \ref{fig:et_fig_1}, it can be seen that aphid populations exhibit boom-bust dynamics with one peak, whereas for Figures \ref{fig:et_fig_2} and \ref{fig:et_fig_3}, multiple peaks can be seen throughout the time span. Due to larger values of $a$ and $r$, peaks in \ref{fig:et_fig_2} appear earlier in the season when compared to Figure \ref{fig:et_fig2}. 
The Economic Threshold was crossed around day $7$ for all three figures.} 
\label{et_eil_2}
\end{figure}

\section{Discussion And Conclusion}
\label{Discussion And Conclusion}
The classical model for aphid dynamics proposed by Kindlmann et. al. \cite{kindlmann2010modelling} accurately predicts the ``boom-bust" single peak dynamics - such as seen in soybean aphid pest populations. However, population dynamics of aphids on several other crops display multiple peaks (two to three) in a single growing season - a dynamic which cannot be captured by classical models \cite{kindlmann2010modelling}. This paper explores several real-world motivations for multiple peak dynamics, and offers alternative modeling methods to address this issue. The economic damage to crop yield, and the specific levels at which such damage occurs due to aphid infestation, has also been examined. 

The current work reaffirms that the VCM proposed in \cite{K07} has extremely rich dynamics, and is capable of predicting multiple population peaks, as is observed in the field with many species of aphids (Fig. \ref{fig:klm_x0_80}). However, the model also possesses finite time blow-up solutions (Fig. \ref{fig:klm_x0_85}) that have not been reported earlier in the literature, to the best of our knowledge. Essentially we show that the VCM \cite{K07}, has two critical features: it is able to correctly reproduce multi-peak dynamics of aphid populations (Fig. \ref{fig:klm_x0_80}), but it is also subject to finite-time blow-up, a new finding - not previously reported for this model. Our analysis of the blow-up behavior, through a numerical study of \eqref{eq:1} - \eqref{eq:22a}, confirms that its initiation is catalyzed by higher initial populations ($x_0$), and is triggered by fluctuations in the $a$ parameter. These results demonstrate a fundamental structural instability that remains even when more realistic factors, e.g., time delay, are included in the model structure. Furthermore, from a purely mathematical angle, one can investigate deriving the necessary conditions for finite-time blow-up solutions in the VCM. We see numerically that the smaller the parameter $a$ is, the smaller the data required to produce a finite-time blow-up solution is. This is unproven by us presently, and warrants further investigation.

Due to the blow-up dynamic intrinsic to the VCM, in section \ref{Adaptive behavior model} we have formulated a novel adaptive behavioral model (\ref{model:part1}) - (\ref{model:part2}), that possesses bounded solutions, for all initial data and parameter regimes. This model includes a mechanism whereby the ``surge" in population initiates a behavior adaptation, efficiently eliminating  blow-up and stabilizing growth, based on a classical model. Herein we are motivated by earlier ideas in the literature, for similar  agro-pest systems \cite{ainseba2023}. Furthermore, we demonstrate that this switch-like behavior leads to a well-posed model. As shown by numerical simulations (Fig. \ref{fig:adaptive_model}), our new method still preserves the capacity to produce multiple population peaks, providing a stable framework for further analysis of aphid dynamics.

Building on recent empirical research by Lewis et al. \cite{lewis2025host}, we formulate a non-autonomous model (\ref{am1}) that includes time-dependent plant fitness and pest growth rates in section \ref{Non-autonomous Model}. This structure accommodates the dynamic coupling between environmental conditions—like flooding demonstrated to influence soybean aphid biotypes—and population expansion. Analytic confirmation as in figure \ref{am_figure} demonstrates that through T-periodic fitness functions, the model reproduces transient multi-peak dynamics reported in the field. Although the analysis of non-autonomous systems is by necessity complicated, the analysis in Lemma \ref{transient_proof} provides an important step in correlating environmental change with population outcomes directly, hence creating a significant framework for predicting climate change responses in pests. Note, although Lemma \ref{lem:tp1} shows that there do not exist T-periodic solutions to (\ref{am1}), future work could investigate the existence of such solutions via other factors, such as structure or control agents. Herein the mathematical techniques of coincidence degree theory would be the tools of choice \cite{mawhin2006topological}. This has been undertaken in a number of recent ecological problems \cite{srivastava2023periodic, srivastava2025existence}.

The models in all of the sections have been compared in section \ref{ET_EIL} to understand the economic injury level and to analyze the economic threshold above which economic losses from harvesting are predicted. Across all models, the economic threshold (ET) and economic injury level (EIL) offer helpful insights for understanding simulated aphid dynamics. Based on the scaling parameter and growth rate, the population peak can occur either earlier or later in the season for \eqref{model:part1}-\eqref{model:part2}, but this combined model exhibits a higher seasonal average compared to models  \eqref{eq:cl1} and \eqref{am1}, see Tables \ref{table:et_eil_levels}-\ref{table:et_eil_levels_2}. However, models \eqref{eq:cl1} and \eqref{am1} displayed similar timing for peak population occurrence and comparable average densities. In all the simulations of section \ref{ET_EIL}, see figures \ref{et_eil}-\ref{et_eil_2}, aphids ultimately surpass the critical thresholds within a season, suggesting that control strategies are required to keep aphid populations below economically damaging levels. These can be done in many ways, such as the use of foliar insecticide applications \cite{ragsdale2011ecology}, biological control strategies, \cite{costamagna2006predators,lin2003effect,miksanek2019matrix}, or the use of resistant soybean varieties, essentially by moving towards the IPM (Integrated pest management) approach \cite{verma2025towards}. Since the Economic Threshold (ET) shows when control action is needed,   regular scouting is important to avoid excessive or unnecessary use of insecticides or biological control agents.  Incorporating these strategies into models will strengthen their utility for guiding management decisions.

 In the current work we have explored various alternative population models for aphid dynamics, which take into account variable fitness of the host plant due to abiotic factors. These include (i) varying host plant suitability, (ii) weather events such as floods or droughts. In particular, events that happen early in the season and possibly dictate the fitness of the plant to some extent during the season have been explored. A possible future direction is inclusion of resistance mechanisms, such as via use of soybean varieties expressing RAG (Resistance to Aphis Glycines) type genes. These have caused the soybean aphid to evolve into subpopulations of virulent (those that can survive on RAG plants) versus avirulent (those that cannot). Structuring the aphid populations into virulent and avirulent particularly to test our adaptive model on RAG plants would be a valuable future direction. It would be interesting to investigate if only the virulent subpopulation possessed the blow-up dynamic - while the avirulent component featured the adaptive mechanisms - as this would be more realistic on a RAG plant. Further directions, drawing from the results of section \ref{ET_EIL}, could be the inclusion of multiple agents of control, such as predators, parasitoids as well as movement mechanisms such as drift and dispersal, as well as non-linear stocking, harvesting and competition \cite{verma2023t, verma2025towards, banerjee2025two, parshad2021some}. These often provide counterintuitive dynamics, which could be ultimately beneficial for devising pest management tactics and strategies. 
All in all, an accurate construction of biologically sound population models, considering all of the nuances involved in the relevant Aphid biology, remains our primary objective for current and future work.


\section{Appendix}
We discuss some preliminary well-established results that ensure the non-negativity of solutions and establish both local and global existence, as outlined in \cite{MP10, Hen84},

\begin{Lem}
\label{lem:class1}
	Let us consider the following $m\times m$ - ODE system: for all $i=1,...,m,$ 
	\begin{equation}
		\label{eq:class1}
		\frac{ d X_i}{dt}=f_i(X_1,...,X_m)~in~ \mathbb{R}^{n} \times \mathbb{R}_+ , \ X_i(0)=X_{i0},
	\end{equation}
	where $f=(f_1,...,f_m):\mathbb{R}^m \rightarrow \mathbb{R}^m$ is $C^1(0,T)$ and $X_{i0}\in L^{\infty}(0,T)$, $\forall T$. Then there exists a $T>0$ and a unique classical solution of (\ref{eq:class1}) on $[0,T).$ If $T^*$ denotes the greatest of these $T's$, then 
	\begin{equation*}
		\Bigg[\sup_{t \in [0,T^*),1\leq i\leq m} ||X_i(t)||_{L^{\infty}(0,T)} < +\infty \Bigg] \implies [T^*=+\infty].
	\end{equation*}
 \end{Lem}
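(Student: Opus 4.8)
The plan is to establish the statement in three stages: local existence and uniqueness via a fixed-point argument, assembly of a maximal interval of existence, and the continuation (blow-up) criterion via contradiction. For the first stage I would recast the initial value problem \eqref{eq:class1} as the integral equation $X_i(t) = X_{i0} + \int_0^t f_i(X(s))\,ds$. Since $f \in C^1$ is locally Lipschitz on every bounded set, fixing a closed ball $\overline{B}$ about the initial data $X_0 = (X_{10},\dots,X_{m0})$ makes the Picard operator $(\Phi X)(t) = X_0 + \int_0^t f(X(s))\,ds$ a self-map of $C([0,T];\overline{B})$ that contracts once $T > 0$ is small enough; the Banach fixed-point theorem then yields a unique continuous solution of the integral equation, which continuity of $f$ upgrades to a classical $C^1$ solution. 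Defining $T^*$ as the supremum of all $T$ admitting a solution on $[0,T)$ and invoking local uniqueness to patch overlapping pieces produces the maximal solution on $[0,T^*)$.

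The crux is the continuation criterion, which I would prove in contrapositive form: assuming $T^* < \infty$, the solution must become unbounded. Suppose instead that $\sup_{t \in [0,T^*)}\|X(t)\| \le M < \infty$. Being continuous, $f$ is bounded by some $K$ on the compact ball $\overline{B(0,M)}$, so for $t_1 < t_2 < T^*$ one has $\|X(t_2) - X(t_1)\| \le \int_{t_1}^{t_2}\|f(X(s))\|\,ds \le K\,|t_2 - t_1|$. Hence $X$ is uniformly Lipschitz as $t \to T^*$, and the Cauchy criterion produces a genuine limit $X^* := \lim_{t \to T^{*-}} X(t)$. Taking $X^*$ as initial data at time $T^*$ and reapplying the local existence result extends the solution strictly beyond $T^*$, contradicting maximality. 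This contrapositive is exactly the asserted implication that boundedness of the solution forces $T^* = +\infty$.

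The main obstacle is this third stage: converting an \emph{a priori} bound into an actual endpoint limit $X^*$ and then re-initializing there. The delicacy lies in checking that the extended trajectory agrees with the original on $[0,T^*)$ and joins $C^1$-smoothly across $t = T^*$ --- here both uniqueness and the Lipschitz estimate are indispensable. By comparison, the local existence and gluing steps are entirely routine.
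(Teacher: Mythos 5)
Your proposal is correct and is essentially the proof the paper intends: the paper does not prove this lemma itself but cites it as a standard result from Henry (1984) and Pierre (2010), and your Picard--Lindel\"of fixed-point argument together with the bounded-implies-extendable continuation criterion is exactly the classical argument underlying those references. You have also, sensibly, read through the statement's notational artifacts (the $L^{\infty}(0,T)$ norms and the $\mathbb{R}^{n}\times\mathbb{R}_{+}$ domain, which are carried over from the reaction--diffusion setting of the cited survey) and proved the intended ODE version.
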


 \begin{Lem}
 \label{lem:qp}
	If the non-linearity in \eqref{eq:class1} $(f_i)_{1\leq i\leq m}$ is  quasi-positive, that is, 
	$$\forall i=1,..., m,~~\forall X_1,..., X_m \geq 0,~~f_i(X_1,...,X_{i-1}, 0, X_{i+1}, ..., X_m)\geq 0,$$
	then $$[\forall i=1,..., m, X_{i0}\geq 0]\implies [\forall i=1,...,m,~ \forall t\in [0,T^*), X_i(t)\geq 0].$$
\end{Lem}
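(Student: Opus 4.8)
The plan is to establish forward invariance of the non-negative orthant $\mathbb{R}^m_+$ by a perturbation argument. The reason a direct approach fails is that quasi-positivity supplies only the \emph{non-strict} inequality $f_i(X_1,\dots,0,\dots,X_m)\ge 0$ on the boundary face $\{X_i=0\}$. Consequently, a naive ``first exit time'' argument applied to \eqref{eq:class1} breaks down: at the instant a component would leave $\mathbb{R}^m_+$ one only gets $\frac{dX_i}{dt}\le 0$ on one hand and $\frac{dX_i}{dt}=f_i(\cdots)\ge 0$ on the other, which is consistent with $\frac{dX_i}{dt}=0$ and hence yields no contradiction. The perturbation is introduced precisely to convert this non-strict inequality into a strict one, so that the vector field points strictly inward on the boundary.

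Concretely, I would fix a compact subinterval $[0,T]\subset[0,T^*)$ and, for $\epsilon>0$, consider the perturbed system $\frac{dX_i^{\epsilon}}{dt}=f_i(X^{\epsilon})+\epsilon$ with the same initial data $X_i^{\epsilon}(0)=X_{i0}\ge 0$. Since $f$ is $C^1$, the perturbed vector field is also $C^1$, so Lemma \ref{lem:class1} guarantees a unique classical solution $X^{\epsilon}$, existing on $[0,T]$ for all sufficiently small $\epsilon$. I would then show $X^{\epsilon}\ge 0$ on $[0,T]$ by contradiction: letting $\tau=\inf\{t\in[0,T]:X_i^{\epsilon}(t)<0 \text{ for some } i\}$ and assuming this set nonempty, continuity forces some index $i$ with $X_i^{\epsilon}(\tau)=0$, while minimality of $\tau$ gives $X_j^{\epsilon}(\tau)\ge 0$ for every $j$ and $\frac{dX_i^{\epsilon}}{dt}(\tau)\le 0$. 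But quasi-positivity evaluated at this boundary point gives $f_i(X_1^{\epsilon}(\tau),\dots,0,\dots,X_m^{\epsilon}(\tau))\ge 0$, so $\frac{dX_i^{\epsilon}}{dt}(\tau)=f_i(\cdots)+\epsilon\ge\epsilon>0$, a contradiction. Hence $X^{\epsilon}\ge 0$ throughout $[0,T]$.

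Finally I would pass to the limit $\epsilon\to 0^{+}$. Because $f$ is $C^1$, the standard continuous-dependence theory (as in \cite{MP10,Hen84}) gives $X^{\epsilon}\to X$ uniformly on $[0,T]$, where $X$ solves the original system \eqref{eq:class1}. Taking limits in $X_i^{\epsilon}(t)\ge 0$ yields $X_i(t)\ge 0$ for all $t\in[0,T]$ and all $i$, and since $T<T^*$ was arbitrary the conclusion holds on all of $[0,T^*)$.

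I expect the main obstacle to be making the limiting step fully rigorous, namely verifying that the maximal existence interval of $X^{\epsilon}$ does not collapse below $[0,T]$ as $\epsilon\to 0$ and that the convergence is genuinely uniform on $[0,T]$; both follow from the $C^1$ continuous-dependence framework but deserve an explicit appeal. An alternative, more abstract route is Nagumo's tangent-cone invariance criterion: the tangent cone to $\mathbb{R}^m_+$ at a boundary point with $X_i=0$ is exactly $\{v:v_i\ge 0\}$, so the quasi-positivity condition is precisely the Nagumo condition and invariance follows at once. I would keep the self-contained perturbation argument in the text and mention the Nagumo viewpoint only as a remark.
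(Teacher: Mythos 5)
Your proposal is correct, but note that the paper itself offers no proof of Lemma \ref{lem:qp}: it is stated in the Appendix as a well-established result quoted from the literature (\cite{MP10,Hen84}), so there is no in-paper argument to compare against. What you have written is the standard proof of that quoted result. Your diagnosis of why the naive first-exit-time argument fails (the boundary inequality is non-strict) is accurate, the $\epsilon$-perturbation correctly converts it into a strict inward-pointing condition, and the passage to the limit via continuous dependence on a compact subinterval $[0,T]\subset[0,T^*)$ is the right way to close the argument; your flagged concern about the existence interval of $X^{\epsilon}$ not collapsing is handled exactly as you say, by $C^1$ continuous dependence on the fixed compact interval where $X$ is bounded. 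Two small points worth making explicit if you write this up: first, in the exit-time step you should note that among the finitely many indices there is at least one $i$ for which $X_i^{\epsilon}$ takes negative values along a sequence $t_n\downarrow\tau$, so that the right difference quotient at $\tau$ is $\leq 0$ for that specific $i$; second, an alternative that avoids the limit altogether is to observe that, since $f$ is $C^1$ and $f_i$ is nonnegative on the face $\{X_i=0\}$, one has locally $f_i(X)\geq -L\,|X_i|$ for a Lipschitz constant $L$, and a Gronwall comparison then gives nonnegativity directly. Either route, or the Nagumo tangent-cone criterion you mention, is acceptable.
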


\begin{Lem}\label{lem:class2}
	Using the same notations and hypotheses as in Lemma \ref{lem:class1}, suppose moreover that $f$ has at most polynomial growth and that there exists $\mathbf{b}\in \mathbb{R}^m$ and a lower triangular invertible matrix $P$ with non negative entries such that  $$\forall r \in [0,+\infty)^m,~~~Pf(r)\leq \Bigg[1+ \sum_{i=1}^{m} r_i \Bigg]\mathbf{b}.$$
	Then, for $X_0 \in L^{\infty}(\mathbb{R}_+^m),$ the system (\ref{eq:class1}) has a strong global solution.
\end{Lem}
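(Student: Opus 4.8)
The plan is to combine the local existence theory and blow-up criterion of Lemma~\ref{lem:class1} with an a priori bound that rules out finite-time blow-up of the $L^\infty$ norm. First I would invoke Lemma~\ref{lem:class1} to obtain a unique strong (classical) solution $X=(X_1,\dots,X_m)$ on a maximal interval $[0,T^*)$, together with the dichotomy that $T^*=+\infty$ unless $\sup_{t\in[0,T^*),\,i}\|X_i(t)\|_{L^\infty}=+\infty$. Since $X_0\in L^\infty(\Omega,\mathbb{R}_+^m)$ and (under the standing quasi-positivity assumption accompanying these systems) Lemma~\ref{lem:qp} guarantees $X_i(t)\ge 0$ for all $t\in[0,T^*)$ and every $i$. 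This non-negativity is essential: the structural inequality $Pf(r)\le[1+\sum_i r_i]\mathbf{b}$ is only assumed on the positive cone $r\in[0,+\infty)^m$, so keeping the orbit in that cone is exactly what lets us evaluate it along the solution.

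The heart of the argument is to extract a single scalar differential inequality from the triangular condition. Let $c_j=\sum_k P_{kj}$ be the $j$-th column sum of $P$. Because $P$ is lower triangular with non-negative entries and is invertible, each diagonal entry $P_{jj}$ is strictly positive, so $c_j\ge P_{jj}>0$ for every $j$; set $\gamma=\min_j c_j>0$. Define the nonnegative functional $V(t)=\sum_j c_j X_j(t)\ge 0$. Summing the $m$ scalar rows of the componentwise inequality $Pf(X)\le[1+\sum_i X_i]\mathbf{b}$ (a sum of inequalities, hence direction-preserving) gives $V'(t)=\sum_k\big(Pf(X)\big)_k \le B\big[1+\sum_i X_i\big]$, where $B=\sum_k b_k$. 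Using $\sum_i X_i\le V/\gamma$, valid by non-negativity and $c_i\ge\gamma$, I obtain the scalar inequality $V'\le B+(B/\gamma)V$ when $B\ge 0$, and $V'\le 0$ when $B<0$. In either regime a Gronwall estimate bounds $V(t)$ on every finite interval $[0,T]$, with the bound depending only on $T$, the constants $B,\gamma$, and $\|X_0\|_{L^\infty}$ through $V(0)$.

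From the bound on $V$ I would recover a bound on each component: since $X_i\ge 0$ and $c_i\ge\gamma>0$, one has $X_i(t)\le V(t)/\gamma$ pointwise, uniformly in $\Omega$. Consequently $\sup_{t\in[0,T]}\|X_i(t)\|_{L^\infty}<\infty$ for every finite $T$, and if $T^*<\infty$ this supremum would remain finite up to $T^*$, contradicting the blow-up alternative of Lemma~\ref{lem:class1}. Hence $T^*=+\infty$. The polynomial-growth hypothesis enters only to upgrade this $L^\infty$ control into a genuine strong global solution: once $X$ is bounded on $[0,T]$, polynomial growth forces $f(X)$, and therefore $X'$, to be bounded as well, so the solution stays classical and extends globally.

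The step I expect to be the main obstacle is correctly engineering the scalar functional so that the triangular hypothesis closes on itself. The subtlety is that each individual row of $Pf(r)\le[\,\cdot\,]\mathbf{b}$ still couples to the full mass $\sum_i X_i$, so no single component decouples; the key observation is that summing the rows produces column-sum weights $c_j$ that are strictly positive precisely because $P$ is invertible, and it is this strict positivity that allows $\sum_i X_i$ to be reabsorbed into $V$ and closes the Gronwall loop. A secondary point requiring care is the sign of $B=\sum_k b_k$, which must be treated in both the dissipative ($B<0$) and linearly-growing ($B\ge 0$) regimes, and the verification that all estimates are uniform in the spatial variable so that the conclusion holds in $L^\infty(\Omega)$.
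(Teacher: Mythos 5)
The paper itself offers no proof of this lemma: it is quoted verbatim (Appendix) as a well-established result from the cited survey of Pierre \cite{MP10}, where it is stated for reaction--diffusion systems and proved by a duality/bootstrapping argument that exploits the lower-triangular structure of $P$ row by row. Your argument is a correct and genuinely more elementary proof of the ODE version actually stated here. The key computation checks out: invertibility plus non-negativity of $P$ forces $P_{jj}>0$, hence every column sum $c_j=\sum_k P_{kj}\ge P_{jj}>0$; summing the $m$ componentwise inequalities gives $V'=\sum_j c_j f_j(X)=\sum_k (Pf(X))_k\le B\bigl[1+\sum_i X_i\bigr]$ with $B=\sum_k b_k$; and non-negativity of the orbit lets you reabsorb $\sum_i X_i\le V/\gamma$ and close a Gronwall estimate, which combined with the blow-up alternative of Lemma \ref{lem:class1} yields $T^*=+\infty$. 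In the ODE setting this $L^1$-type control \emph{is} an $L^\infty$ bound, which is exactly why the triangular structure (needed in the PDE case to bootstrap component by component) and the polynomial-growth hypothesis (needed there for the duality step) become superfluous in your argument --- a point worth making explicit rather than leaving implicit. The one substantive caveat is the one you already flag: the structural inequality is only assumed on $[0,+\infty)^m$, so the proof requires the orbit to remain in the non-negative cone, i.e.\ quasi-positivity of $f$ as in Lemma \ref{lem:qp}. That hypothesis is not literally included in the statement of Lemma \ref{lem:class2} (an omission inherited from the source), so you should state it as an explicit standing assumption rather than a parenthetical; without it the lemma as written is not provable, since $f$ is unconstrained off the positive cone.
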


Under these assumptions, the following local existence result is well known, see \cite{Hen84}.

\begin{Thm}
	\label{thm:class3}
	The system (\ref{eq:class1}) admits a unique, classical solution $(X_{1},X_{2},..X_{m})$ on $%
	[0,T_{\max }]$. If $T_{\max }<\infty $ then 
	\begin{equation}
		\underset{t\nearrow T_{\max }}{\lim }\Big\{ \left\Vert X_{1}(t)\right\Vert
		_{\infty }+\left\Vert X_{2}(t)\right\Vert _{\infty }...+ \left\Vert X_{m}(t)\right\Vert _{\infty }\Big\} =\infty ,  
	\end{equation}%
	where $T_{\max }$ denotes the eventual blow-up time in $\mathbb{L}^{\infty }[0,T_{\max }).$
\end{Thm}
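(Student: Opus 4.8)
The plan is to recast the system \eqref{eq:class1} as an abstract semilinear evolution equation and invoke the analytic-semigroup machinery of Henry \cite{Hen84}, which is the natural framework for the spatially extended $L^{\infty}(\Omega)$ statement asserted here. Writing $U=(X_1,\dots,X_m)$ and letting $A$ denote the (componentwise) diffusion operator equipped with the prescribed boundary conditions on $\Omega$, the system takes the form
\begin{equation}
\label{eq:abs}
\frac{dU}{dt} + AU = F(U), \quad U(0)=U_0,
\end{equation}
where $F$ is the Nemytskii (superposition) operator associated with $f=(f_1,\dots,f_m)$. The operator $-A$ is sectorial and hence generates an analytic semigroup $\{e^{-tA}\}_{t\ge 0}$, and one works in the scale of fractional power spaces $\mathcal{X}^{\alpha}=D(A^{\alpha})$. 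Because $f\in C^1$ with at most polynomial growth, and because for a suitable exponent $\alpha\in(0,1)$ one has the embedding $\mathcal{X}^{\alpha}\hookrightarrow C(\bar\Omega)^m\hookrightarrow L^{\infty}(\Omega)^m$, the map $F:\mathcal{X}^{\alpha}\to\mathcal{X}$ is well defined and Lipschitz on bounded sets.

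First I would establish local existence and uniqueness through the variation-of-constants (Duhamel) formulation: a mild solution of \eqref{eq:abs} satisfies
\begin{equation}
\label{eq:duh}
U(t) = e^{-tA}U_0 + \int_0^t e^{-(t-s)A}F(U(s))\,ds.
\end{equation}
I would show that the right-hand side of \eqref{eq:duh} defines a contraction on a closed ball of $C([0,T];\mathcal{X}^{\alpha})$ for $T$ small enough, using the smoothing estimate $\|A^{\alpha}e^{-tA}\|\le C\,t^{-\alpha}e^{-\delta t}$ (whose singularity $t^{-\alpha}$ is integrable since $\alpha<1$) together with the local Lipschitz bound on $F$. The Banach fixed-point theorem then yields a unique mild solution on $[0,T]$, and the analytic-semigroup regularity theory of \cite{Hen84} upgrades it to a classical solution of \eqref{eq:class1}. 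Extending forward and taking the supremum of all existence times defines the maximal interval $[0,T_{\max})$.

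The blow-up dichotomy follows from a continuation argument by contradiction. Suppose $T_{\max}<\infty$ yet the norm does not blow up, i.e. $\sup_{t\in[0,T_{\max})}\big(\|X_1(t)\|_{\infty}+\cdots+\|X_m(t)\|_{\infty}\big)=:R<\infty$. On the corresponding ball of $\mathcal{X}^{\alpha}$ the Lipschitz constant of $F$ is uniform, so the contraction argument above furnishes a local existence time $T_{*}=T_{*}(R)>0$ independent of the starting instant. Choosing $t_0\in(T_{\max}-T_{*},T_{\max})$ and solving \eqref{eq:abs} with datum $U(t_0)$ produces a solution on $[t_0,t_0+T_{*}]$ with $t_0+T_{*}>T_{\max}$, contradicting maximality. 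Hence $\lim_{t\nearrow T_{\max}}\big(\|X_1(t)\|_{\infty}+\cdots+\|X_m(t)\|_{\infty}\big)=\infty$, which is the claimed alternative.

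I expect the principal obstacle to be functional-analytic rather than computational: the Laplacian does not generate an analytic semigroup directly on $L^{\infty}(\Omega)$, so the norm appearing in the statement must be reached through the intermediate space $C(\bar\Omega)$ or via the fractional-power embedding $\mathcal{X}^{\alpha}\hookrightarrow L^{\infty}(\Omega)$. Making the base space and the exponent $\alpha$ mutually consistent — so that simultaneously $F$ is locally Lipschitz into $\mathcal{X}$, the smoothing exponent is integrable, and the $\mathcal{X}^{\alpha}$-bound can be recovered from the $L^{\infty}$-bound in the continuation step — is the delicate part. In the purely kinetic, diffusion-free reading of \eqref{eq:class1} the semigroup degenerates to the identity and the argument reduces to the classical Picard–Lindelöf dichotomy applied uniformly in the spatial variable; but the statement is phrased for the spatially extended $L^{\infty}(\Omega)$ setting, which is what the semigroup framework above is designed to handle.
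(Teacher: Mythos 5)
The paper offers no proof of this theorem --- it is stated as a well-known result with a pointer to Henry \cite{Hen84} --- and your sketch is precisely the standard analytic-semigroup/Duhamel argument from that source: local existence by contraction in a fractional power space $\mathcal{X}^{\alpha}\hookrightarrow L^{\infty}(\Omega)$, followed by the continuation-by-contradiction dichotomy, with the one genuine subtlety (recovering the $\mathcal{X}^{\alpha}$ bound from the $L^{\infty}$ bound via the variation-of-constants formula and the polynomial growth of $f$) correctly identified and handled exactly as in \cite{Hen84}. You also rightly observe that for the diffusion-free system \eqref{eq:class1} as actually written in the paper the semigroup is the identity and the whole statement reduces to the Picard--Lindel\"of continuation dichotomy, so your approach is essentially the same as the one the paper defers to.
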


\subsection{Proof of Theorem \ref{thm:t1t}}
\begin{proof}
\label{proof_thm:t1t}
Notice, 
\begin{equation}
\label{eq:3}
h(t) = \int^{t}_{0}a x(s)ds = a \int^{t}_{0} x(s)ds \geq a x(t) => \frac{1}{a}h(t) \geq x(t).
\end{equation}
We make the following lower estimate,
\begin{eqnarray}
&& \frac{dx}{dt} \nonumber \\
&=& (r-h)x\left(1-\frac{x}{k}\right)  \nonumber \\
&=& r x - \frac{r}{k}x^{2} - h x + \frac{1}{k}hx^{2} \nonumber \\
&\geq& r x - \frac{r}{k_{min}}x^{2} - h x + \frac{1}{k_{max}}hx^{2} \nonumber \\
&\geq&  r x - \frac{r}{ k_{min} a} h x - h x + \frac{1}{k_{max}}hx^{2} \nonumber \\
&=& r x -\left( \frac{r}{a k_{min}}  + 1\right) h x + \frac{1}{k_{max}}hx^{2} \nonumber \\
&=& r x -\left( \frac{r}{a k_{min}}  + 1\right) 
\left( \int^{t}_{0}a x(s)ds \right) x + \frac{1}{k_{max}}hx^{2} \nonumber \\
\end{eqnarray}
Now, we proceed by contradiction. Assume $x(t)$ remains bounded on any time interval $[0,T], \ T < \infty$, then via the embedding, $L^{\infty}(0,T) \hookrightarrow L^{1}(0,T)$, we must have that,

\begin{equation}
\label{eq:3e}
 \int^{T}_{0}a x(s)ds \leq T C ||x(t)||_{\infty}.
\end{equation}
Thus, inserting this in the above inequality, which must hold for any $t \in [0, T]$, we have,
\begin{eqnarray}
&& \frac{dx}{dt} \nonumber \\
&\geq& r x -\left( \frac{r}{a k_{min}}  + 1\right) 
\left( \int^{t}_{0}a x(s)ds \right) x + \frac{1}{k_{max}}hx^{2} \nonumber \\
&\geq& r x -\left( \frac{r}{a k_{min}}  + 1\right) 
\left(  T C ||x(t)||_{\infty} \right) x + \frac{1}{k_{max}}ax^{3} \nonumber \\
&\geq& (r-M)x + \frac{1}{k_{max}}ax^{3}
\end{eqnarray}

Where $T C |||x(t)||_{\infty} < M$. However, in this case $x$ blows up in comparison with the ODE $\boxed{y^{'} = C_{3}y^{3} + C_{4}y}$,  with $C_{4} > 0$ or $C_{4} < 0$. This contradicts $x$ being bounded at any $T < \infty$. Thus $x$ must blow up at a finite time $T^{*} \in [0,T]$.
\end{proof}
\subsection{Proof of Theorem \ref{thm:t1}}
\begin{proof}

\label{proof_thm:t1}
Using the lower estimate from theorem \ref{thm:t1t},

\begin{equation}
\frac{dx}{dt} 
\geq r x -\left( \frac{r}{a k_{min}}  + 1\right) h x + \frac{1}{k_{max}}hx^{2} 
\geq \left(  \left( \frac{x}{k_{max}}\right)  - \left( \frac{r  }{a k_{min}}  + 1\right)  \right) h x
\end{equation}

Thus $x$ blows up trivially for sufficiently large data chosen as per the requirement of Lemma \ref{lem:l1}, in comparison with the ODE, $\boxed{y^{'} = \epsilon y^{2}}$,
$\epsilon > 0$. 
\end{proof}
\subsection{Proof of Lemma \ref{Lem:l12}}
\begin{proof}
\label{proof_Lem:l12}
We know from the mean value theorem of integrals,

\begin{equation}
\frac{1}{T} \int^{T}_{0}x(s)ds = x(T^{**}), \ T < \infty, \ T^{**} \in [0,T].
\end{equation}

Via Theorem \ref{thm:t1}, we have the blow-up of $x(t)$ at some $T^{*} < \infty$. Consider,

\begin{equation}
\frac{1}{2(T^{*} - \Delta t)}x(T^{**}) \leq \frac{1}{(T^{*} - \Delta t)} \int^{T^{*} - \Delta t}_{0}x(s)ds = x(T^{**}),
\end{equation}

Now taking the limit as $\Delta t \rightarrow 0$ entails, $T^{**} \rightarrow T^{*}$, thus we have

\begin{equation}
\frac{1}{2(T^{*})}x(T^{*}) = \infty \leq \frac{1}{T^{*}} \int^{T^{*} }_{0}x(s)ds = x(T^{*}) = \infty,
\end{equation}

The result follows via the squeezing theorem.
\end{proof}

\begin{figure}
\centering
\includegraphics[width = 7cm]{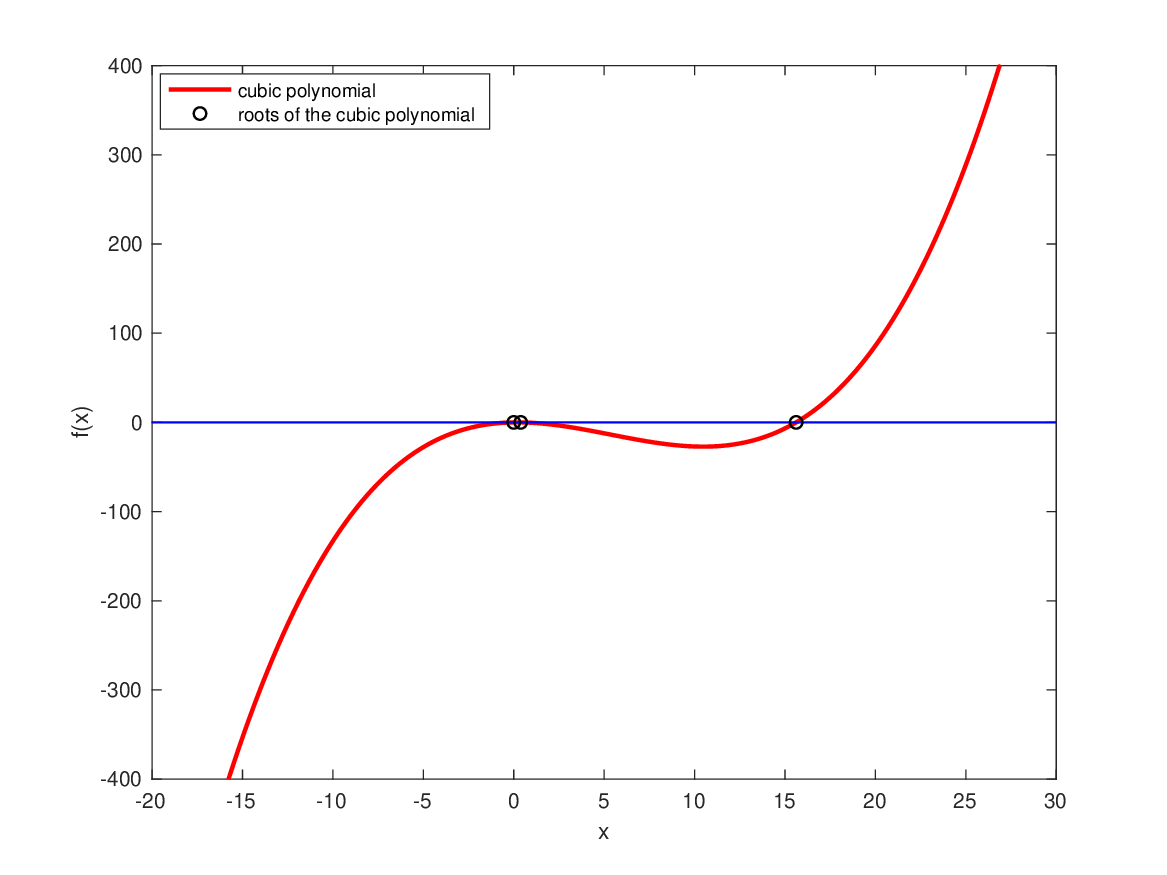 }
\caption{This figure represents the cubic polynomial in (\ref{eq:c3}). The parameter set used: $ K_{max} = 10, K_{min} =1, a = 0.5, r=0.3$. }
\label{plot_poly}
\end{figure}
\subsection{Proof of Theorem \ref{thm:l1k}}
\begin{proof}
\label{proof_thm:l1k}
Assume that there is a maximal time of existence $T^{*}$ to the solution of \eqref{eq:d1}. Via the equation, $y^{'} > 0$, thus $y$ is increasing. Let \begin{equation}
\label{eq:s21}
    t_{1} = \sup \{ t \geq 0, y^{'}(s) > 0, s \in [0,t] \}.
\end{equation} 

If $t_{1} < T^{*}$, then $y^{'}(t_{1}) = 0$. Now, let us consider 2 cases.

\textbf{Case I: $t_{1} < \tau$,} \ Since $y^{'}(t_{1}) = 0$, we have 
$|y(t_{1})|^{p} - M|\phi(t_{1}-\tau)|^{q} = 0.$

This is not possible due to the increasing dynamics of $y$, 
$|y(t_{1})|^{p} > |y(0)|^{p} = |\phi(0)|^{p}  \geq M|\phi(t_{1}-\tau)|^{q}.$

\textbf{Case II: $t_{1} > \tau$,} \ In this case, again since $y^{'}(t_{1}) = 0$, we have 
\begin{equation}
|y(t_{1})|^{p} - M|y(t_{1}-\tau)|^{q} = 0.
\end{equation}

Now $y(t) > y(0) = \phi(0) \geq M \geq 1$, \ So, $
|y(t_{1})| = M^{\frac{1}{p}}|y(t_{1}-\tau)|^{\frac{q}{p}} < M^{\frac{1}{p}}|y(t_{1}-\tau)|. 
$
\

On the other hand, \ $
|y(t_{1})|^{p} > |y\left(t_{1}-\frac{\tau}{2}\right)|^{p}. 
$

Thus, we have, 

\begin{equation}
|y\left(t_{1}-\frac{\tau}{2}\right)|^{p} - M|y(t_{1}-\tau)|^{q} < 0, 
\end{equation}

but this would imply $y^{'}(t) = 0$, at some $t < t_{1}-\frac{\tau}{2}$, contradicting the fact that $t_{1}$ is the supremum of the earlier set constructed in \eqref{eq:s21}. This entails that, $y$ is increasing on $[0,T^{*})$. Thus $y(t)^{p} - M y(t-\tau)^{q} > 0$, and thus so is $y(t)^{q} - M y(t-\tau)^{q} > 0$. Hence rearranging \eqref{eq:d1}, we have,

\begin{equation}
y^{'}(t) = y(t)^{p} - y(t)^{q} + \left( y(t)^{q} - M y(t-\tau)^{q} \right)
\end{equation}

Using the positivity results derived earlier, blow-up at a finite time $T^{*}$, for sufficiently large data, is immediate in comparison with an ODE of the form
$y^{'}(t) = y(t)^{p} - y(t)^{q}$. 
This proves the theorem.
\end{proof}
\subsection{Proof of Theorem \ref{thm:t13}}
\begin{proof}
\label{proof_thm:t13}
We begin with the following estimate on a time interval $t \in [0,T], T < \infty$.
    \begin{eqnarray}
&& \frac{dx}{dt} \nonumber \\
&\geq& r x -\left( \frac{r}{a k_{min}}  + 1\right) 
h x + \frac{1}{k_{max}}hx^{2} \nonumber \\
&\geq& r x -\left( \frac{r}{a k_{min}}  + 1\right) \frac{1}{a}
h^{2} + \frac{1}{k_{max}}hx^{2} \nonumber \\
&=&r x -\left( \frac{r}{a k_{min}}  + 1\right) \frac{1}{a}
\left(\int^{t}_{0}x(s) ds\right)^{2} + \frac{1}{k_{max}}hx^{2}\nonumber \\
&=& r x - \left( \frac{r}{a k_{min}}  + 1\right) \frac{T^{2}}{a}\left(  x(t-\tau) \right)^{2} + \frac{1}{k_{max}}ax^{3} \nonumber \\
&\geq& \frac{1}{k_{max}}ax^{3} - \left( \frac{r}{a k_{min}}  + 1\right) \frac{T^{2}}{a}\left(  x(t-\tau) \right)^{2}
\end{eqnarray}

Now, if we choose initial data such that.

\begin{equation}
x_{0} >  k_{max} \left( \frac{r}{a k_{min}}  + 1\right)\frac{T^{2}}{a^{2}}
\end{equation}

Then a direct application of Theorem \ref{thm:l1k} yields the finite time blow-up of $x$.

\end{proof}

\bibliography{biblio.bib}

\end{document}